\documentclass{article}

\usepackage{microtype}
\usepackage{graphicx}
\usepackage{subcaption}
\usepackage{xspace}
\usepackage{booktabs} 
\usepackage{amsthm}
\usepackage{caption}
\usepackage{subcaption}
\usepackage{multirow}

\usepackage{hyperref}
\usepackage{url}

\newcommand{\audiodear}{\textsc{AudioDEAR}\xspace}

\usepackage[preprint]{icml2026}

\usepackage{amsmath}
\usepackage{amssymb}
\usepackage{mathtools}
\usepackage{amsthm}

\usepackage[capitalize,noabbrev]{cleveref}

\theoremstyle{plain}
\newtheorem{theorem}{Theorem}[section]
\newtheorem{proposition}[theorem]{Proposition}

\newtheorem{corollary}[theorem]{Corollary}
\theoremstyle{definition}
\newtheorem{definition}[theorem]{Definition}

\theoremstyle{remark}

\usepackage[textsize=tiny]{todonotes}

\icmltitlerunning{Fast Text-to-Audio Generation with One-Step Sampling via Energy-Scoring and Auxiliary Contextual Representation Distillation}

\begin{document}

\twocolumn[
  \icmltitle{Fast Text-to-Audio Generation with One-Step Sampling via Energy-Scoring and Auxiliary Contextual Representation Distillation}



  \icmlsetsymbol{equal}{*}

  \begin{icmlauthorlist}
    \icmlauthor{Kuan-Po Huang}{ntu,amazon}
    \icmlauthor{Bo-Ru Lu}{amazon}
    \icmlauthor{Byeonggeun Kim}{amazon}
    \icmlauthor{Mihee Lee}{amazon}
    \icmlauthor{Zalan Fabian}{amazon}
    \icmlauthor{Renard Korzeniowski}{amazon}
    \icmlauthor{Qingming Tang}{amazon}
    \icmlauthor{Greg Ver Steeg}{amazon}
    \icmlauthor{Hung-yi Lee}{ntu}
    \icmlauthor{Chieh-Chi Kao}{amazon}
    \icmlauthor{Chao Wang}{amazon}
  \end{icmlauthorlist}

  \icmlaffiliation{ntu}{National Taiwan University}
  \icmlaffiliation{amazon}{Amazon AGI}
  \icmlcorrespondingauthor{Kuan-Po Huang}{gerber861017@gmail.com}
  \icmlcorrespondingauthor{Chieh-Chi Kao}{chiehchi@amazon.com}
  \icmlkeywords{Text-to-audio, Energy-distance}

  \vskip 0.3in
]



\printAffiliationsAndNotice{\AmazonDisclaimer}

\begin{abstract}
Autoregressive (AR) models with diffusion heads have recently achieved strong text-to-audio performance, yet their iterative decoding and multi-step sampling process introduce high-latency issues. To address this bottleneck, we propose a one-step sampling framework that combines an energy-distance training objective with representation-level distillation. An energy-scoring head maps Gaussian noise directly to audio latents in one step, eliminating the need for a costly recursive diffusion sampling process, while distillation from a masked autoregressive (MAR) text-to-audio model preserves the strong conditioning learned during diffusion training. On the AudioCaps benchmark, our method consistently outperforms prior one-step baselines such as ConsistencyTTA, SoundCTM, AudioLCM and AudioTurbo, on both objective and subjective metrics, while substantially narrowing the quality gap to AR diffusion systems with multi-step sampling. Compared to the state-of-the-art AR diffusion system, IMPACT, our approach achieves up to $8.5$× faster batch inference with highly competitive audio quality. These results demonstrate that combining energy-distance training with representation-level distillation provides an effective recipe for fast, high-quality text-to-audio synthesis.
\end{abstract}

\section{Introduction}
With the rapid growth of user-generated content, personalized audio generation has become increasingly important.
Recent advances in text-to-audio (TTA) generation aim to synthesize audio directly from natural language prompts, allowing humans to engage with the models more intuitively. Driven by advances in deep generative models, TTA generation has made significant progress. Nowadays, latent diffusion models (LDMs; \citealp{rombach2022high}) have become a leading approach,
\begin{figure}[t]
    \begin{center}
    \includegraphics[width=0.47\textwidth]{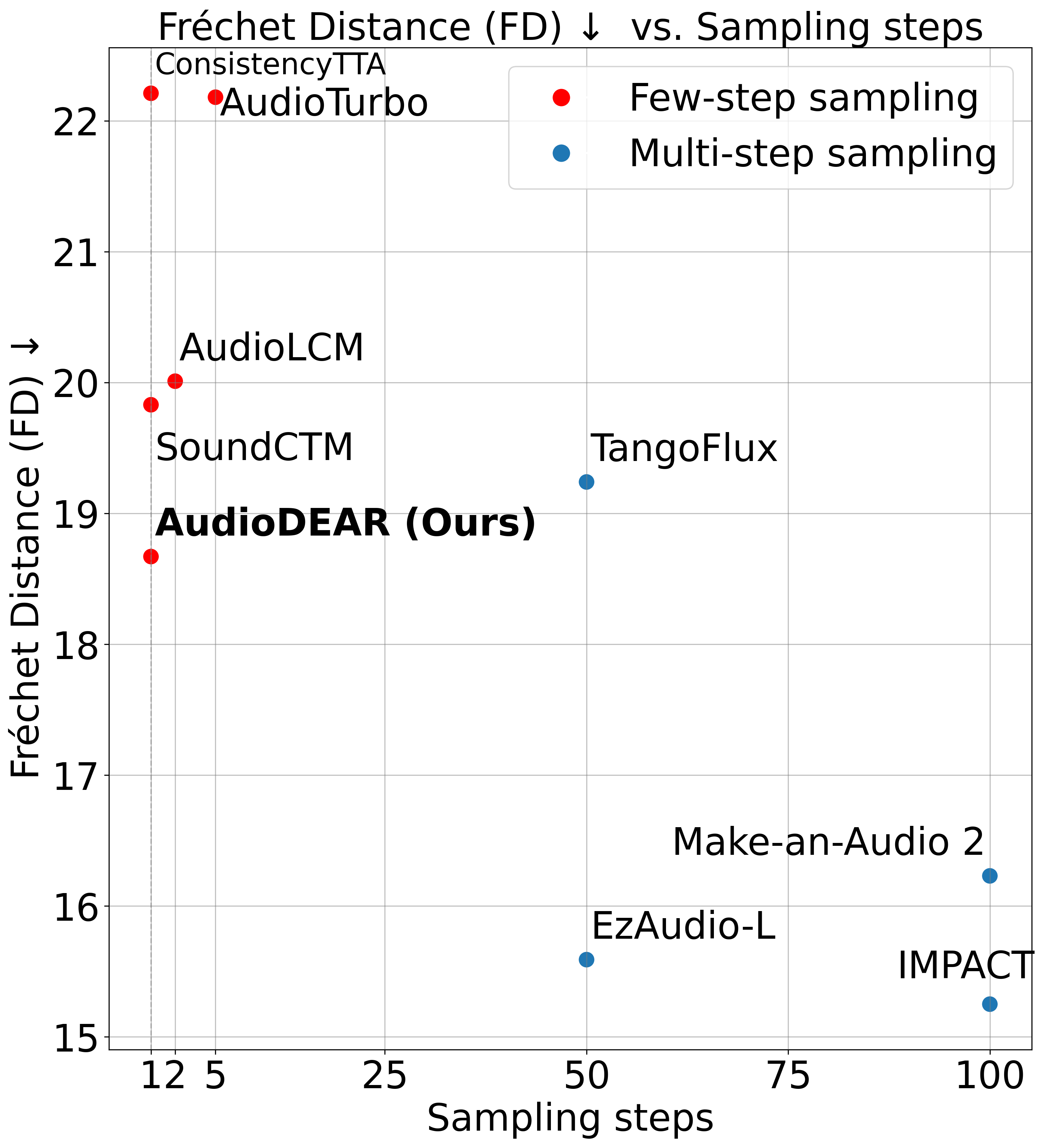}
    \end{center}
    \caption{FD vs sampling steps. Our \audiodear model achieves the lowest FD score among few-step sampling models.}
    \label{fig:fd_vs_step}
\end{figure} achieving state-of-the-art results on challenging TTA benchmarks such as AudioCaps \cite{kim2019audiocaps}.

Autoregressive continuous sampling \cite{li2024autoregressive} is a recent trend in generative models that combines the power of autoregressive (AR) transformers with a sampling method such as diffusion \cite{ho2020denoising} or flow matching \cite{lipmanflow}, to generate continuous latents. This approach is highly effective because it avoids information loss often seen in discrete-based models, while enabling models to generate content progressively. Instead of producing an entire sequence at once, the model incrementally generates content, using prior outputs as context for subsequent iterations. The iterative process of modeling continuous latents leads to high-quality results and has shown strong performance in many modalities, including image \cite{li2024autoregressive, fan2025fluid}, video \cite{zhang2025generative}, speech \cite{jiaditar}, audio \cite{audiomntp,huangimpact}, and multi-modal large language models \cite{sun2024multimodal}. 
However, despite its strong quality of generation, the inherent iterative nature of both the autoregressive decoding and the diffusion sampling process contributes to considerable inference latency, which presents a critical trade-off between generation quality and inference speed, making them impractical for real-time applications. 

In this context, a key challenge is the computational cost of the generative process. 
For a model with autoregressive decoding of $r$ iterations, and each decoding iteration requiring an $n$-step sampling process, such as diffusion \cite{ho2020denoising} or flow matching \cite{lipmanflow},
the total generation process requires $r \times n$ sampling steps, leading to significant inference latency. 
A natural way to accelerate generation is to reduce the steps of either $r$ or $n$. However, prior work, including Diffsound \cite{yang2023diffsound}, SoundStorm \cite{borsos2023soundstorm}, MAGNET \cite{zivmasked}, MaskGIT \cite{chang2022maskgit}, MAR \cite{li2024autoregressive}, and IMPACT \cite{huangimpact}, indicates that overly aggressive reduction of AR steps ($r$) leads to substantial degradation of generative quality. 
Therefore, reducing the number of sampling steps ($n$) is a more suitable strategy to achieve faster generation with minimal compromise to output quality.
To accelerate this process, recent research \cite{songdenoising, heun, lu2022dpm, lu2025dpm, zheng2023dpm, liu2022pseudo, bao2022analytic, zhang2022fast, song2023consistency, salimans2022progressive, shortcut, meanflow} focuses on reducing the number of sampling steps for generation. 
However, a consistent limitation is that the quality of one-step sampling ($n=1$) for generation remains inferior to that of multi-step sampling. 
For example, in the image generation field, Shortcut Model \cite{shortcut} reduces sampling steps and improves over naive flow matching, but their one-step sampling quality still lags behind multi-step approaches. 
Similarly, MeanFlow \cite{meanflow} outperforms prior one-step diffusion and flow matching models, yet struggles to generate high-quality outputs under small model configurations autoregressively. 
In the results, we demonstrate that both Shortcut and MeanFlow training objectives are ineffective under AR sampling frameworks for TTA generation.

To enable one-step sampling, we propose \audiodear, a \textbf{D}istillation-enhanced \textbf{E}nergy-scoring \textbf{A}uto\textbf{R}egressive model for text-to-\textbf{Audio} generation, which integrates an energy-based training objective \cite{szekely2013energy} with distillation techniques to achieve fast and high-quality audio generation.
Building on current state-of-the-art diffusion-based models \cite{huangimpact} for TTA generation, we replace the diffusion loss with an energy-distance objective, a statistical estimate that measures the discrepancy between two probability distributions based on expected pairwise distances between samples. 
The reformulated training objective allows the model to learn to map raw noise vectors directly to audio latents, removing the need for multiple sampling steps ($n$). To further close up the performance gap between our one-step\footnote{The term ``one-step'' refers to one sampling step with the sampling module. The model still requires $r$ autoregressive iterations for generation. } generation method and multi-step generation models, we further adopted an additional distillation loss between the transformer backbones of a diffusion-based variant and our proposed energy-scoring framework.
Introducing this auxiliary distillation loss into the training objective led to consistent improvements across all objective metrics, including Fréchet Distance (FD; \citealp{fd}), Fréchet Audio Distance (FAD; \citealp{fad}), Kullback–Leibler divergence (KL), Inception Score (IS; \citealp{is}), and Contrastive Language-Audio Pre-training (CLAP; \citealp{clap}) score.
Overall, our \audiodear outperforms existing fast consistency-based \cite{song2023consistency} TTA generation models targeting few-step sampling, such as ConsistencyTTA \cite{consistencytta}, SoundCTM \cite{soundctm}, AudioLCM \cite{audiolcm}, and AudioTurbo \cite{audioturbo}, on both objective and subjective metrics, while narrowing the performance gap between one-step and multi-step sampling approaches. 
In summary, our contributions of this work are:
\begin{itemize}
    \item We are the first to apply the energy‐distance objective in TTA generation, enabling one-step latent synthesis with low latency.
    \item We leverage a diffusion-based transformer backbone as a fixed teacher, and introduce an auxiliary distillation loss that aligns its feature representations with those of our energy-based model, yielding consistent improvements across all objective performance metrics on the AudioCaps benchmark.
    \item We surpass baselines such as ConsistencyTTA, SoundCTM, AudioTurbo, and AudioLCM in FD, KL, IS, and CLAP scores under one-step sampling constraints. Notably, Figure~\ref{fig:fd_vs_step} highlights our superior FD performance specifically under strict few-step sampling budgets.

\end{itemize}

\section{Related Work}
\label{sec:related_work}

\subsection{Autoregressive models with sampling head}
\label{subsec:ar_diff}
A significant trend in generative modeling is the integration of autoregressive (AR) models with sampling heads to handle continuous data modalities, thereby avoiding the information loss associated with traditional vector quantization shown in existing work \cite{yuan2024continuous, xu2024comparing, fan2025fluid}. The autoregressive nature of these models is crucial, as it allows further iterations to utilize previously generated content as context, progressively generating the output, and enhancing predictive capabilities in subsequent steps. Pioneering this approach, \citet{li2024autoregressive} introduced the masked autoregressive (MAR) \cite{li2024autoregressive} model, which adopts a diffusion loss in place of the standard cross-entropy loss. In this framework, the AR model predicts a conditioning vector for each position of a sequence, which then guides a lightweight diffusion head to generate the continuous-valued latents. This core framework was successfully scaled for text-to-image generation in Fluid \cite{fan2025fluid} and adapted for efficient TTA synthesis in IMPACT \cite{huangimpact}. 
This paradigm has also been adapted by several LLM-style, decoder-only transformers for various applications and demonstrated huge success in applications like multimodal generation and understanding \cite{sun2024multimodal}, image generation \cite{gao2025d}, video generation \cite{zhang2025generative}, speech generation \cite{jiaditar}, and spoken chatbots \cite{zeng2024glm}. 
Though performing well across various tasks, the main problem of this AR sampling framework is the inference speed, as each AR step requires a large number of sampling steps for generation.

\subsection{Few-step sampling}
Diffusion models deliver high-fidelity outputs but incur significant inference cost. Training‐free samplers such as DDIM \cite{songdenoising}, Heun \cite{heun}, the DPM-Solver family \cite{lu2022dpm, lu2025dpm, zheng2023dpm}, PNDM \cite{liu2022pseudo}, Analytic-DPM \cite{bao2022analytic}, and DEIS \cite{zhang2022fast} can reduce the number of sampling steps to the order of tens, yet struggle to push below 10 steps for generation tasks. Recent breakthrough methods like Shortcut models \cite{shortcut} and MeanFlow \cite{meanflow} have achieved significant progress in few-step image generation with under 4 steps, yet substantial quality gaps persist between these fast approaches and their multi-step counterparts. This performance disparity is particularly pronounced when using smaller model configurations commonly employed in research settings with less than 200M parameters, where the trade-off between latency and generation quality remains a key challenge.
In this work, we address the problem of few-step sampling through energy-scoring models, which only requires one step for sampling, while maintaining good quality and semantic relevance for TTA generation.
In Appendix~\ref{app:toy-example}, we demonstrate visualization results of a toy dataset of different continuous sampling strategies, showcasing the limitations of existing few-step sampling methods.

\subsection{Generative Models with Energy-Distance Scoring}
Energy-scoring methods \cite{e-statistics} generate samples in one forward pass by minimizing a distance-based scoring rule, enabling fast sampling, whereas diffusion \cite{ho2020denoising} and flow matching \cite{lipmanflow} methods require solving iterative denoising or flow steps, often tens to hundreds, making generation much slower.
Building on these advantages, energy-distance training objectives have been applied in generative modeling for various tasks, including image generation \cite{bellemare2017cramer, ear}, text-to-speech \cite{gritsenko2020spectral}, and time series modeling \cite{pacchiardi2022likelihood, pacchiardi2024probabilistic}. However, the use of these objectives for sound event audio generation remains a relatively unexplored area. In this work, we demonstrate that energy-scoring effectively accelerates TTA generation and can be further enhanced with representation distillation to deliver high-quality, high-fidelity, and high-text-relevance audio.

\subsection{Representation Distillation}
Proposed by \cite{hinton2015distilling}, knowledge distillation aligns internal feature representations between a teacher and a student network. This transfer improves the student's performance on downstream tasks, effectively narrowing the performance gap between the two models. FitNets \cite{romero2015fitnetshintsdeepnets} introduced intermediate feature matching, with later methods such as contrastive representation distillation \cite{tian2019contrastive} focusing on richer feature alignment in CNNs, and approaches like Patient-KD \cite{sun2019patient} adapting similar ideas for transformers.
In the speech domain, methods like DistilHuBERT \cite{chang2022distilhubert} and its derivatives \cite{huang2023ensemble} further affirm the effectiveness of representation-level distillation for speech processing tasks. In contrast to prior work, we apply representation-level distillation between backbone transformer models trained under different generative objectives. Specifically, we align the contextual representations of the backbone transformer encoder of IMPACT \cite{huangimpact}, trained with a diffusion loss, with those of our proposed model, trained using an energy-scoring objective, to significantly narrow the gap between one-step and multi-step TTA generation models.

\section{Method}
\label{sec:method}
\subsection{Background: Masked Autoregressive Continuous Sampling}
\label{subsec:mare}

Masked autoregressive (MAR) continuous sampling \cite{li2024autoregressive} conducts the autoregressive modeling paradigm in continuous latent spaces. In contrast to discrete token prediction, this framework generates high-dimensional latent variables at each iteration through a continuous sampling head, thereby alleviating the information loss typically encountered in discrete tokens. 

Training is carried out under a masked generative modeling framework. Given a latent sequence $y = \{y^1, \cdots, y^L \} \in \mathbb{R}^{L \times d}$, where $d$ is the latent dimension, a random subset of positions is masked by replacing them with mask tokens. The partially masked latent sequence is then served as the input of the mask autoregressive transformer $\text{Enc}_{\phi}$ to generate a sequence of contextual representations $\{h^1, \cdots, h^L\} \in \mathbb{R}^{L \times D}$, where $D$ is the hidden dimension. For each masked position $i$, a continuous sampling head conditioned on $h^i$ predicts the masked latents, which are then compared against the ground truth latents at those corresponding positions. The training objective is defined with respect to the chosen sampling strategy, in our case, energy-scoring, which calculates the loss according to the energy-distance training objective elaborated in Section~\ref{subsec:es}.

During inference, iterative parallel decoding \cite{chang2022maskgit} is adopted to generate latent sequences. This method generates an audio latent sequence through multiple decoding iterations, with each iteration generating a random subset of positions to gradually build up the whole sequence throughout the process. 
A major limitation of this approach is its reliance on multi-step sampling methods, such as diffusion \cite{ho2020denoising} and flow matching \cite{lipmanflow}, as illustrated in ~\autoref{fig:diagram}(c), to generate latents. This reliance substantially increases inference time.
To address this limitation, we introduce a one-step sampling strategy based on energy scoring as illustrated in ~\autoref{fig:diagram}(b), which requires only one forward pass and substantially reduces the latency of latent generation.

\subsection{Energy-scoring}
Energy-scoring \cite{e-statistics} provides a direct mapping from the source noise distribution to the latent space in one forward pass. The mapping is learned by optimizing the generated distribution of the model and that of the target distribution.
\label{subsec:es}
\paragraph{Energy-distance.}
Let $P$ and $Q$ be probability distributions on $\mathbb{R}^d$. According to \cite{e-statistics}, the \emph{energy-distance} between $P$ and $Q$ is defined as
\begin{equation}
\label{eq:energy-distance}
\mathcal{E}(P, Q) 
= 2\,\mathbb{E}\!\left[ \| X - Y \| \right] 
 - \mathbb{E}\!\left[ \| X - X' \| \right] 
 - \mathbb{E}\!\left[ \| Y - Y' \| \right],
\end{equation}
where
$
X, X' \stackrel{\text{i.i.d.}}{\sim} P, 
\quad
Y, Y' \stackrel{\text{i.i.d.}}{\sim} Q,
$
and $\|\cdot\|$ denotes the Euclidean norm (L2 norm) in $\mathbb{R}^d$. 
The energy-distance satisfies $\mathcal{E}(P, Q) \ge 0$, with equality if and only if $P = Q$ (See Appendix \ref{app:energy-distance} for the proof).  Larger values of $\mathcal{E}(P, Q)$ correspond to greater dissimilarity between the two distributions.
In the context of model training, the random variables $X$ and $X'$ are drawn from the model's predictive distribution $P_\theta$ parameterized by $\theta$, while $Y$ and $Y'$ are drawn from the target distribution $Q$ corresponding to the ground truth training data. 
The term $\mathbb{E}\!\left[ \| Y - Y' \| \right]$ depends only on $Q$ and is therefore \emph{independent} of the model parameters $\theta$. 
Consequently, this term acts as an additive constant in the objective function and does not affect the optimization. 
By omitting the constant term $\mathbb{E}\!\left[ \| Y - Y' \| \right]$, minimizing the energy-distance with respect to $\theta$ is therefore equivalent to minimizing
\begin{equation}
\label{eq:training-objective}
2\mathbb{E}_{X \sim P_\theta, \, Y \sim Q} \!\left[ \| X - Y \| \right] 
- \, \mathbb{E}_{X, X' \sim P_\theta} \!\left[ \| X - X' \| \right].
\end{equation} 

\paragraph{Training objective.}
During training, the expectations in \ref{eq:training-objective} can be estimated via Monte Carlo sampling. Specifically, for each data point $y$ drawn from $Q$, we draw two independent samples $x_1, x_2 \sim P_\theta$ from the model's predictive distribution, and compute the empirical estimate
\begin{equation}
\label{eq:training-objective-2}
\mathcal{L}_{\text{energy}} 
= \| x_1 - y \| + \| x_2 - y \| - \| x_1 - x_2 \|.
\end{equation}
Empirical justification for selecting two samples to estimate Equation~\ref{eq:training-objective} is provided in Section~\ref{subsec:m}.
\begin{figure*}[t]
    \begin{center}
    \includegraphics[width=\linewidth]{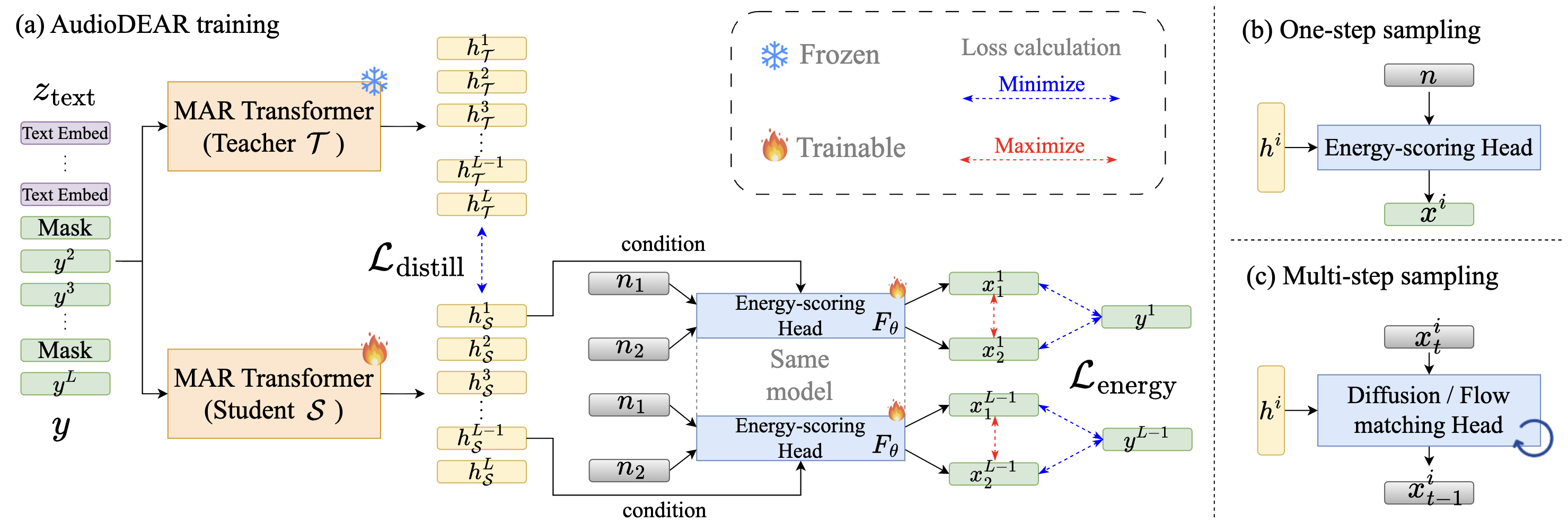}
    \end{center}
    \caption{(a) Training pipeline of our energy-scoring framework with representation distillation. Input positions $1$ and $L-1$ are masked for demonstration. More details of the mask autoregressive sampling framework are described in ~\autoref{app:marcs}. (b) One-step sampling of our energy-scoring head during the inference phase. The architecture of the energy-scoring head is elaborated in ~\autoref{app:energy-scoring_module}. (c) Multi-step sampling of a diffusion or flow matching head during the inference phase. Overall structural diagrams for training and inference are provided in ~\autoref{app:overall_struc}.}
    \label{fig:diagram}
    \vskip -0.15in
\end{figure*}

\paragraph{Energy-scoring head.}
\label{subsubsec:energy-scoring-module}
As shown in Figure~\ref{fig:diagram}(a), in the training phase, when predicting the $i^\text{th}$ audio latent of a sequence, we first draw a noise vector 
$n_1 \sim \mathcal{N}(0, I)$. The energy-scoring head $F_\theta$ receives as input the contextual representation 
$h^i \in \mathbb{R}^D$ produced by the masked autoregressive transformer $\text{Enc}_{\phi}$ and the noise vector $n_1$, producing the first sample $x^{i}_1 = F_\theta(h^i, n_1)$.
Subsequently, an independent noise vector $n_2 \sim \mathcal{N}(0, I)$ is drawn, and the second sample is obtained analogously as $x^{i}_2 = F_\theta(h^i, n_2)$.
The two resulting samples $x^{i}_1$ and $x^{i}_2$ are then used to form
the training objective in
\ref{eq:training-objective-2}, which minimizes the discrepancy between generated samples and the ground truth while maximizing the distance between different model samples.

As shown in ~\autoref{fig:diagram}(b), in the inference phase, for each selected position for generation at each decoding iteration, the contextual representation $h^i$ and a Gaussian noise vector $n \sim \mathcal{N}(0, I)$ are provided to the energy-scoring sampling head to generate latent $x^i$. The main advantage of energy-scoring is that it generates latents in one forward pass, eliminating the need for multiple sampling steps, which are typically required for diffusion and flow matching.

\subsection{Representation Distillation}
To further bridge the performance gap between our proposed one-step energy‐scoring model and multi‐step diffusion counterparts, we introduce a \emph{representation distillation} strategy from a strong teacher model as shown in ~\autoref{fig:diagram}(a). Specifically, we employ the backbone transformer from IMPACT \cite{huangimpact}, trained with a diffusion loss, as the fixed teacher network. Our student network's backbone transformer shares the same architecture but is trained with the energy‐distance objective described in Equation~\ref{eq:training-objective-2}.

Let $\{h_\mathcal{T}^1, \cdots, h_\mathcal{T}^L\} \in \mathbb{R}^{L \times D}$ and $\{h_\mathcal{S}^1, \cdots, h_\mathcal{S}^L\} \in \mathbb{R}^{L \times D}$ denote the hidden representations at the final transformer block for the teacher ($\mathcal{T}$) and student ($\mathcal{S}$) models, respectively, where $L$ is the sequence length and $D$ is the hidden dimension.
We align the final-layer representations of the student with those of the teacher by minimizing the mean squared error (MSE) between the corresponding hidden states:
\begin{equation}
\label{eq:distillation}
\mathcal{L}_{\text{distill}}
= \frac{1}{L} \sum_{i=1}^{L}
\| h_{\mathcal{S}}^i - h_{\mathcal{T}}^i \|_2^2.
\end{equation}
The final training objective for the student combines the energy‐distance loss from ~\ref{eq:training-objective-2} with the distillation term:
\begin{equation}
\label{eq:total-loss}
\mathcal{L}_{\text{total}}
= \mathcal{L}_{\text{energy}} + \lambda \cdot \mathcal{L}_{\text{distill}},
\end{equation}

where $\lambda$ is a hyperparameter controlling the influence of the distillation loss.
By aligning the student’s contextual representations with those of the teacher, we allow our energy‐scoring framework to inherit the strong conditioning capabilities learned by the diffusion‐trained transformer, without incurring the inference cost of multi‐step sampling.

\section{Experimental Setup}
\label{sec:exp_detail}
\subsection{Datasets}
We adopt the two widely used TTA datasets for training, AudioCaps \cite{kim2019audiocaps} and WavCaps \cite{wavcaps}. Audio clips shorter than 10 seconds are zero-padded, while those exceeding 10 seconds are truncated by selecting a random contiguous 10-second segment. 
Following the AudioLDM \cite{audioldm} preprocessing protocol, each audio clip is standardized into a 10-second segment and transformed into a Mel spectrogram, resulting in approximately $1,200$ hours of audio. 
In addition, we sample 500 hours of audio from AudioSet \cite{audioset}, resulting in a combined training corpus of $1700$ hours. 
This dataset is used to train both the IMPACT teacher model and our proposed energy-scoring model. 
More details on the training data set for each baseline model can be found in \autoref{app:data}.

For evaluation, we adopted the AudioCaps evaluation split, which consists of 964 audio samples, each paired with five textual descriptions. Following previous work \cite{audioldm,ezaudio,huangimpact}, we randomly select one caption from each set as the conditioning text for TTA generation.

\begin{table*}[t]
\footnotesize
\setlength\tabcolsep{3.7 pt}
\caption{
System-level performance of text-to-audio generation models. 
``Data'' denotes the total training data duration of the model in hours, including the data involved during training the teacher model, if any. 
``Step'' denotes the number of sampling steps required to sample an audio latent. 
``REL.'' and ``OVL.'' denote the subjective evaluation reported as mean opinion score for text-relevance and overall audio quality, respectively. The subscripts denote the standard error.
``Dist.'' stands for distillation. Detailed statistical measures for subjective evaluation are listed in Table \ref{tab:rel-ovl} in Appendix~\ref{app:subj_eval}. Latency is the time required for a model to generate a batch of 8 audio clips measured in seconds on an Nvidia V100 32GB VRAM GPU.
Best performance values among the few-step sampling methods are marked in bold. Second-best performance values are marked with underlines. 
}
\label{tab:main_result}
\centering
\begin{tabular}{lccc|cccccccc}
\toprule
\textbf{AudioCaps} & \textbf{Data} & \textbf{\# para} & \textbf{Step} & \textbf{FD $\downarrow$} & \textbf{FAD $\downarrow$} & \textbf{KL $\downarrow$} & \textbf{IS $\uparrow$} & \textbf{CLAP $\uparrow$} & \textbf{REL $\uparrow$} & \textbf{OVL $\uparrow$} & \textbf{Latency $\downarrow$} \\
\midrule
Ground Truth & - & - & - & - & - & - & - & 0.373 & $4.45 \pm 0.09$ & $3.68\pm 0.08$ \\
\midrule
\textbf{Discrete-based} \\
MAGNET-L & $\approx$ 4000 & 1.5B & - & 26.19 & 2.36 & 1.64 & 9.10 & 0.253 & - & - & 24.75\\
\midrule
\midrule
\multicolumn{11}{l}{\textbf{Diffusion/Flow matching Models}} \\
Tango 2 & $\approx$ 3333 & 866M & 200 & 20.66 & 2.63 & 1.12 & 9.09 & 0.375 & $4.07 \pm 0.08$ & $3.42 \pm 0.09$ & 182.23\\
TangoFlux & 3700 & 516M & 50 & 19.24 & 2.32 & 1.18 & 12.43 & 0.382 & - & - & 45.50\\
EzAudio-L & $>$ 5500 & 596M & 50 & 15.59 & 2.25 & 1.38 & 11.35 & 0.391 & - & - & 28.93\\
EzAudio-XL &  $>$ 5500 & 874M & 50 & 14.98 & 3.01 & 1.29 & 11.38 & 0.387 & $4.03 \pm 0.08$ & $3.31 \pm 0.07$ & 39.78\\
Make-an-Audio 2 & 3700 & 160M & 100 & 16.23 & 2.03 & 1.29 & 9.95 & 0.345 & - & - & 15.87\\
AudioLDM2-full & 29510 & 346M & 200 & 32.14 & 2.17 & 1.62 & 6.92 & 0.273 & - & - & 95.45\\
AudioLDM2-full-L & 1150k & 712M & 200 & 33.18 & 2.12 & 1.54 & 8.29 & 0.281 & - & - & 194.77\\
AudioMNTP & 1200 & 193M & 100 & 14.81 & 1.68 & 1.16 & 9.67 & 0.336 & - & - & -\\
IMPACT & 1700 & 193M & 100 & 15.25 & 1.26 & 1.06 & 10.57 & 0.372 & $4.38 \pm 0.10$ & $3.47 \pm 0.09$ & 22.34\\
\midrule
\midrule
\textbf{Few-step Sampling} \\
ConsistencyTTA & 145 & 559M & 1 & 22.21 & 2.83 & 1.32 & 8.92 & \underline{0.328} & $3.92 \pm 0.05$ & $3.01 \pm 0.07$ & 3.03\\
SoundCTM & 145 & 1.2B & 1 & \underline{19.83} & \underline{2.51} & 1.36 & 7.98 & 0.310 & $3.73 \pm 0.09$ & $3.10 \pm 0.09$ & \textbf{2.48}\\
\midrule
AudioLCM & 3700 & 160M & 1 & 25.36 & 4.44 & 1.74 & 8.25 & 0.267 & - & - & 2.75\\
AudioLCM & 3700 & 160M & 2 & 20.01 & \textbf{2.17} & 1.48 & \textbf{9.89} & 0.308 & $3.67 \pm 0.10$ & $3.05 \pm 0.07$ & 2.93\\
\midrule
AudioTurbo & $\approx$ 2000 & 1.1B & 5 & 22.18 & - & 1.30 & 8.88 & - & - & - & - \\
AudioTurbo & $\approx$ 2000 & 1.1B & 10 & 20.65 & - & 1.29 & 9.40 & - & - & - & - \\
\midrule
\audiodear $_{\text{w/o Dist.}}$ & 1700 & 191M & 1 & 22.09 & 3.82 & \underline{1.22} & 8.07 & 0.298 & - & - & \underline{2.61}\\
\audiodear & 1700 & 191M & 1 & \textbf{18.67} & 2.79 & \textbf{1.06} & \underline{9.66} & \textbf{0.334} & $\textbf{4.27} \pm 0.04$  & $\textbf{3.27} \pm 0.06$ & \underline{2.61}\\
\bottomrule


\end{tabular}
\vskip -0.1in
\end{table*}

\subsection{Model Configurations}
\label{subsec:model_config}
The input audio is represented as a Mel spectrogram of size $(1024 \times 64)$, which is encoded into VAE latents of size $(256 \times 16 \times 8)$ using AudioLDM's VAE model. We adopt a patch size of $4$, flattening the patches into a sequence of length $256$ with a latent dimension $d$ of $128$. 
Textual information is incorporated by appending the Flan-T5 \cite{flant5} and CLAP \cite{clap} text embeddings to the patched audio latents, following the IMPACT configuration, resulting in a text-embedding sequence of length $78$.
For the transformer backbone, we employ the IMPACT-Base architecture, consisting of $24$ transformer layers with a hidden dimension $D$ of $768$. 
The energy-scoring head consists of residual MLP blocks, with the noise vectors provided as input to the energy-scoring head, while contextual representations $h^i$ are injected via adaptive normalization \cite{adaln}. Further details on the architecture of the energy-scoring head are provided in Appendix~\ref{app:energy-scoring_module}.
During training, we apply a masking rate randomly sampled from the range $[70, 100)$ to the audio latents, enabling masked generative modeling with the energy-distance objective. 
For representation distillation, we adopt the transformer backbone of the diffusion-based state-of-the-art model IMPACT \cite{huangimpact} as the teacher, and integrate the distillation loss with the energy-distance objective using a distillation weight $\lambda = 1000$, as defined in ~\autoref{eq:total-loss}.
Unless otherwise specified, we train with a batch size of $2048$ and a learning rate of $1\mathrm{e}{-3}$. 
At inference time, we follow IMPACT by setting the number of decoding iterations to $64$. 
Following related work \cite{ma2025efficient}, we apply classifier-free guidance during inference, with CFG scale set to $4.0$. Ablation studies and implementation details on CFG can be found in Appendix~\ref{app:cfg}.

\subsection{Evaluation}
We evaluate our proposed TTA generation framework using both objective and subjective metrics. For objective assessment, we report Fréchet distance (FD; \citealt{fd}), Fréchet audio distance (FAD; \citealt{fad}), Kullback–Leibler divergence (KL), and inception score (IS; \citealt{is}) following the AudioLDM evaluation protocol \footnote{\href{https://github.com/haoheliu/audioldm_eval}{https://github.com/haoheliu/audioldm\_eval}}, and CLAP similarity \cite{clap} using the same pre-trained CLAP model employed by IMPACT. The CLAP model used for training\footnote{\href{https://huggingface.co/lukewys/laion_clap/blob/main/630k-audioset-fusion-best.pt}{https://huggingface.co/lukewys/laion\_clap/blob/main/630k-audioset-fusion-best.pt}} is different from the one used for evaluation\footnote{\href{https://huggingface.co/laion/clap-htsat-fused}{https://huggingface.co/laion/clap-htsat-fused}} to avoid taking advantage of training and evaluating with the same model. 
Subjective evaluation is conducted on 90 generated audio samples conditioned on the AudioCaps evaluation set prompts, using the user interface and rating criteria defined in AudioBox \cite{audiobox}. Each sample receives at least 9 independent ratings per subjective metric, with all annotators trained to follow the evaluation guidelines.
Adhering to the settings in IMPACT \cite{huangimpact}, we evaluate batch inference latency to mirror production environments where throughput drives scalability \cite{passoni2025diffused}. The metric is calculated as the duration to synthesize a batch of 8 10-second audio clips on a single Nvidia V100 32GB VRAM GPU.

\section{Results and Discussions}
We report evaluations of our proposed \audiodear framework. We organize the results into system-level comparisons, ablation studies on representation distillation, analyses of alternative sampling methods, investigations of classifier-free guidance, and the impact of the number of samples used to estimate the energy-distance.

\subsection{System-level Performance Comparisons}


Table~\ref{tab:main_result} shows that our one-step energy-scoring model with representation distillation achieves the strongest overall results on AudioCaps, outperforming prior fast sampling baselines across several primary metrics, FD, KL, CLAP, REL, and OVL. It approaches the quality of multi-step diffusion/flow matching models, including Tango 2 \cite{tango2}, TangoFlux \cite{tangoflux}, MAGNET \cite{zivmasked}, AudioLDM 2 \cite{audioldm2}, Make-an-audio 2 \cite{maa2}, AudioMNTP \cite{audiomntp}, and IMPACT \cite{huangimpact}. \audiodear falls only slightly behind the two-step AudioLCM on FAD and IS, but surpassing it on both subjective evaluation metrics. Notably, our model achieves this with a single sampling step, whereas AudioLCM requires two.

Table~\ref{tab:main_result} shows the latency across different state-of-the-art TTA models. Among the few-step sampling TTA models, ConsistencyTTA, SoundCTM, AudioLCM, and our proposed one-step sampling model \audiodear, our model achieves the second lowest inference latency, only slightly slower than SoundCTM, while outperforming it on FD, KL, IS, CLAP, REL, and OVL.
When compared with IMPACT, the state-of-the-art 100-step diffusion-based TTA model, our approach delivers comparable objective performance, with only up to $8.6\%$ degradation in IS, and $10.2\%$ degradation in CLAP scores, while still achieving a roughly $8.5$× reduction in latency for generating a 10-second audio clip.

\begin{table}[ht]
  \centering
    \footnotesize
    \caption{
      Ablation study on the distillation weights ($\lambda$) of representation distillation.
    }
    \label{tab:distil_weight}
    \begin{tabular}{c|ccccc}
      \toprule
      $\lambda$ & \textbf{FD $\downarrow$} & \textbf{FAD $\downarrow$} & \textbf{KL $\downarrow$} & \textbf{IS $\uparrow$} & \textbf{CLAP $\uparrow$} \\
      \midrule
      freeze & 22.79 & 4.46 & 1.24 & 7.51 & 0.288 \\
      0 & 22.09 & 3.82 & 1.22 & 8.07 & 0.298 \\
      50 & 20.52 & 2.99 & 1.12 & 8.74 & 0.316 \\
      100 & 20.04 & 3.11 & 1.12 & 8.87 & 0.321 \\
      500 & 19.62 & 2.84 & 1.10 & 8.97 & 0.322 \\
      1000 & \textbf{18.67} & \textbf{2.79} & \textbf{1.06} & \textbf{9.66} & \textbf{0.334} \\
      5000 & 19.88 & 2.98 & 1.10 & 8.76 & 0.311 \\
      \bottomrule
    \end{tabular}
\end{table}

\subsection{Representation Distillation}
The ablation study in Table \ref{tab:distil_weight} demonstrates the critical role of representation distillation in strengthening the one-step energy-scoring model. The setting ``freeze'' denotes that the transformer backbone is initialized from IMPACT and kept frozen during training, while only the lightweight energy-scoring head is optimized. Results show that freezing the IMPACT-initialized transformer backbone produces the weakest results across all metrics, confirming that fine-tuning is indispensable. Making the transformer layers trainable ($\lambda=0$) leads to moderate improvements, and increasing the distillation weight $\lambda$ to $50$ yields substantial gains, particularly in FAD, IS, and CLAP scores. 
Furthermore, increasing $\lambda$ further produces consistent improvements in both fidelity and semantic alignment, with the best overall results at $\lambda=1000$, achieving the lowest FD, KL, and the highest IS and CLAP. However, setting a more aggressive $\lambda=5000$ results in a regression in all metrics, suggesting that excessively strong distillation over-constrains the model and diminishes the benefits of distillation.

\begin{table}[ht]
  \centering
    \caption{Ablation study on adjusting the classifier-free guidance (CFG) scale.}
    \label{tab:cfg}
    \begin{tabular}{c|ccccc}
      \toprule
      CFG & \textbf{FD $\downarrow$} & \textbf{FAD $\downarrow$} & \textbf{KL $\downarrow$} & \textbf{IS $\uparrow$} & \textbf{CLAP $\uparrow$} \\
      \midrule
      1.0 & 34.37 & 6.64 & 1.71 & 5.18 & 0.196 \\ 
      2.0 & 22.84 & 3.60 & 1.13 & 7.79 & 0.295 \\
      3.0 & 19.81 & 3.00 & 1.06 & 8.98 & 0.323 \\
      4.0 & \textbf{18.67} & \textbf{2.79} & \textbf{1.06} & \textbf{9.66} & \textbf{0.334} \\
      5.0 & 19.06 & 3.08 & 1.12 & 9.40 & 0.328 \\
      6.0 & 19.80 & 3.40 & 1.16 & 8.98 & 0.319 \\
      \bottomrule
    \end{tabular}
\end{table}

\begin{table*}[t]
\footnotesize
\caption{
Comparison of objective performance across sampling methods, including Shortcut, MeanFlow, and our proposed \audiodear model with energy-scoring, using the IMPACT-style framework. The best few-step sampling results are shown in bold.
}
\label{tab:sc_mf}
\renewcommand{\arraystretch}{1.0}
\centering
\begin{tabular}{lcc|ccccc}
\toprule
& \textbf{\# params} & \textbf{steps} & \textbf{FD $\downarrow$} & \textbf{FAD $\downarrow$} & \textbf{KL $\downarrow$} & \textbf{IS $\uparrow$} & \textbf{CLAP $\uparrow$} \\
\midrule
Diffusion & 193M & 100 & 15.25 & 1.26 & 1.06 & 10.57 & 0.372 \\
& 193M & 4 & 138.95 & 34.34 & 4.90 & 1.52 & -0.049 \\
& 193M & 1 & 128.47 & 34.44 & 4.94 & 1.18 & -0.047 \\
\midrule
Flow matching & 193M & 100 & 15.65 & 1.78 & 1.05 & 10.33 & 0.377 \\
& 193M & 4 & 69.26 & 14.91 & 2.16 & 3.60 & 0.179 \\
 & 193M & 1 & 126.44 & 43.79 & 4.17 & 1.02 &-0.057 \\
\midrule
MeanFlow & 194M & 4 & 34.33 & 11.19 & 1.51 & 5.78 & 0.252 \\
& 194M & 1 & 79.46 & 13.52 & 3.81 & 2.34 & 0.080 \\
\midrule
Shortcut Model & 194M & 4 & 63.99 & 12.39 & 2.32 & 3.55 & 0.172 \\
& 194M & 1 & 98.12 & 27.33 & 4.12 & 1.27 & -0.073 \\
\midrule
Energy-scoring (Ours) & 191M & 1 & 22.09 & 3.82 & 1.22 & 8.07 & 0.298 \\
Energy-scoring + distill (Ours) & 191M & 1 & \textbf{18.67} & \textbf{2.79} & \textbf{1.06} & \textbf{9.66} & \textbf{0.334} \\
\bottomrule
\end{tabular}
\end{table*}

\subsection{Classifier-Free Guidance}
Table~\ref{tab:cfg} examines the effect of varying the classifier-free guidance (CFG) scale on our energy-scoring model with representation distillation. The results show a clear trend where increasing CFG from $1.0$ to $4.0$ progressively improves performance across objective metrics, with the best overall performance achieved at CFG $= 4.0$. Lower CFG values, such as $1.0$, result in substantially degraded semantic alignment and audio quality, while excessively high values beyond $4.0$ lead to slight degradation, suggesting an optimal balance between guidance strength and audio quality at CFG $= 4.0$. 

\begin{table}[ht]
    \setlength\tabcolsep{3.0 pt}
    \caption{Ablation study on the number of samples used to calculate the energy-distance for training.}
    \label{tab:num_samples}
    \renewcommand{\arraystretch}{1.0}
    \centering
    \begin{tabular}{c|ccccc}
        \toprule
        \# samples $m$ & \textbf{FD $\downarrow$} & \textbf{FAD $\downarrow$} & \textbf{KL $\downarrow$} & \textbf{IS $\uparrow$} & \textbf{CLAP $\uparrow$} \\
        \midrule
        $m=2$ & 18.67 & 2.79 & \textbf{1.06} & \textbf{9.66} & \textbf{0.334} \\
        $m=3$ & 18.32 & 2.68 & 1.11 & 9.24 & 0.322 \\
        $m=4$ & \textbf{18.13} & \textbf{2.53} & 1.09 & 9.19 & 0.322 \\
        \bottomrule
    \end{tabular}
\end{table}

\subsection{Number of Samples for Energy-distance Estimation}
\label{subsec:m}
During training, two randomly selected samples produced by the model $x_1$ and $x_2$ are used to calculate the training objective as shown in ~\autoref{eq:training-objective-2}. More generally, a larger number of samples can be drawn to estimate the energy-distance via the extended form of ~\autoref{eq:energy-distance-pointwise} in Appendix~\ref{app:num_samples}. \autoref{tab:num_samples} investigates the effect of varying the number of samples $m$ used during training. Increasing $m$ from $2$ to $4$ progressively reduces both FD and FAD scores, suggesting improved fidelity. Specifically, FD decreases from $18.67$ at $m=2$ to $18.13$ at $m=4$, while FAD drops from $2.79$ to $2.53$. However, this gain comes with nuanced trade-offs: although FD and FAD improve, the KL divergence slightly worsens when moving from $m=2$ to higher sample counts, and the IS peaks at $m=2$ with $9.66$ before dropping modestly at larger values of $m$. Similarly, CLAP scores are highest with $m=2$ but decrease at both $m=3$ and $m=4$. Overall, these findings suggest that while larger sample sizes enhance fidelity, the setting of $m=2$ provides the best balance, yielding the strongest semantic alignment and generative diversity.

\subsection{Different Sampling Methods}
Table \ref{tab:sc_mf} evaluates the performance of our one-step energy-scoring method against both one-step and few-step baselines with the IMPACT-style autoregressive framework.
Our proposed one-step energy-scoring method significantly outperforms other sampling baselines like Shortcut and MeanFlow.
While multi-step diffusion and flow matching models achieve strong fidelity (FD and FAD) and high semantic alignment (CLAP), their quality degrades sharply when evaluated under few-step scenarios. 
In contrast, our energy-scoring approach maintains substantially lower FD and FAD scores and higher IS and CLAP values in the one-step setting, indicating better perceptual quality and semantic relevance. 
The distillation-enhanced variant achieves the best one-step results overall, with objective scores relatively comparable to the 100-step diffusion baseline, demonstrating that representation-level guidance from the diffusion-trained teacher effectively narrows the quality gap while retaining the efficiency of one-step sampling.

\section{Conclusions and Future Work}
We introduce a one-step text-to-audio (TTA) framework trained with an energy-distance objective and representation distillation from a diffusion-trained teacher. 
By eliminating the need for multiple sampling steps at each decoding iteration, our method achieves $8.5$× faster inference than the state-of-the-art TTA model, IMPACT, while maintaining strong audio fidelity and semantic relevance. 
Our extensive experiments on AudioCaps show significant gains over existing strong few-step sampling baselines, such as ConsistencyTTA, SoundCTM, AudioLCM, and AudioTurbo. Most importantly, \audiodear significantly narrowed the gap between one-step generation models and multi-step diffusion systems. 
These results demonstrate that combining energy-distance training with representation-level guidance offers an effective recipe for low-latency, high-quality audio generation. 
In future work, we aim to further reduce AR steps to push the limits of low-latency audio generation.

\section*{Impact Statement}
This research significantly advances efficient generative media by introducing \audiodear, a framework that achieves $8.5$× faster inference than state-of-the-art text-to-audio models, IMPACT, while maintaining high fidelity and semantic relevance. By leveraging energy-distance training combined with representation distillation, this work enables one-step sampling and overcomes the latency bottlenecks inherent in traditional diffusion models, narrowing the performance gap between one-step and multi-step sampling models.
\bibliography{example_paper}
\bibliographystyle{icml2026}

\newpage
\appendix
\onecolumn
\section{Energy-distance}
\label{app:energy-distance}
The following content lists out the definitions and theorems required to prove Corollary~1, stated as follows.
\newline
\begin{corollary}
\label{cor:1}
Let $X$ and $Y$ be independent random vectors in $\mathbb{R}^d$ with distributions $P$ and $Q$, respectively. Then
\[
2\mathbb{E}[\|X-Y\|] - \,\mathbb{E}[\|X-X'\|] - \mathbb{E}[\|Y-Y'\| ]\;\geq\; 0,
\]
where $X'$ and $Y'$ are independent copies of $X$ and $Y$, respectively. Equality holds if and only if $P = Q$.
\end{corollary}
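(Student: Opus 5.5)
We plan to follow the standard Fourier-analytic route of Székely and Rizzo. The idea is to express the energy-distance as a weighted $L^2$ distance between the characteristic functions of $P$ and $Q$. Write $\varphi_P(t)=\mathbb{E}[e^{i\langle t,X\rangle}]$ and $\varphi_Q(t)=\mathbb{E}[e^{i\langle t,Y\rangle}]$. Throughout we assume $\mathbb{E}\|X\|,\mathbb{E}\|Y\|<\infty$, which is implicit in the statement, since otherwise the expectations are undefined. The target identity is
\begin{equation*}
2\mathbb{E}\|X-Y\| - \mathbb{E}\|X-X'\| - \mathbb{E}\|Y-Y'\| \;=\; \frac{1}{c_d}\int_{\mathbb{R}^d} \frac{|\varphi_P(t)-\varphi_Q(t)|^2}{\|t\|^{d+1}}\,dt ,
\end{equation*}
where $c_d=\pi^{(d+1)/2}/\Gamma\bigl(\tfrac{d+1}{2}\bigr)$. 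Once this holds, nonnegativity is immediate. If the left side vanishes, then $\varphi_P=\varphi_Q$ almost everywhere. Both functions are continuous, so they agree everywhere, and the uniqueness theorem for characteristic functions gives $P=Q$. Conversely, $P=Q$ makes the three expectations coincide, so the left side is $0$.

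The key lemma is the integral representation
\begin{equation*}
\int_{\mathbb{R}^d}\frac{1-\cos\langle t,x\rangle}{\|t\|^{d+1}}\,dt \;=\; c_d\,\|x\| \qquad \text{for all } x\in\mathbb{R}^d .
\end{equation*}
We prove it in three steps.
\begin{enumerate}
\item Convergence: near $t=0$ the numerator is $O(\|t\|^2)$, and at infinity it is bounded by $2$. The integrand is therefore integrable in dimension $d$.
\item Homogeneity: rotation invariance and the substitution $t\mapsto t/\|x\|$ show the integral equals $\|x\|$ times the same integral with $x$ replaced by a unit vector.
\item The constant: $c_d$ comes from evaluating the one-direction integral in polar coordinates, using the Beta and Gamma functions.
\end{enumerate}
Only the positivity and finiteness of $c_d$ matter for the corollary. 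So if the exact constant becomes messy, we would simply define $c_d$ as that integral for a unit vector and note that $0<c_d<\infty$.

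To connect this lemma to characteristic functions, expand
\begin{equation*}
|\varphi_P-\varphi_Q|^2=|\varphi_P|^2+|\varphi_Q|^2-2\,\mathrm{Re}\,\varphi_P\overline{\varphi_Q}.
\end{equation*}
With $X'$ and $Y'$ independent copies, each term is the expectation of a cosine:
\begin{itemize}
\item $|\varphi_P(t)|^2=\mathbb{E}\cos\langle t,X-X'\rangle$,
\item $|\varphi_Q(t)|^2=\mathbb{E}\cos\langle t,Y-Y'\rangle$,
\item $\mathrm{Re}\,\varphi_P\overline{\varphi_Q}=\mathbb{E}\cos\langle t,X-Y\rangle$.
\end{itemize}
The coefficients sum to zero ($1+1-2=0$). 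We can therefore write
\begin{equation*}
|\varphi_P-\varphi_Q|^2 = 2\mathbb{E}[1-\cos\langle t,X-Y\rangle]-\mathbb{E}[1-\cos\langle t,X-X'\rangle]-\mathbb{E}[1-\cos\langle t,Y-Y'\rangle].
\end{equation*}
Dividing by $\|t\|^{d+1}$, integrating, and exchanging integral and expectation via the lemma then yields the target identity.

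The main obstacle is justifying that exchange. We cannot integrate the signed combination term by term without care. Each of the three pieces, however, has a nonnegative integrand. Tonelli's theorem applies to each piece separately and gives $c_d\mathbb{E}\|X-Y\|$, $c_d\mathbb{E}\|X-X'\|$, and $c_d\mathbb{E}\|Y-Y'\|$ respectively. All three are finite by the first-moment assumption, so subtracting them is legitimate. This completes the argument.
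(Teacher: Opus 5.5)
Your proof is correct, and it takes a genuinely different route from the paper's. The paper never uses characteristic functions. It picks a probability measure $W$ dominating $P$ and $Q$ and sets $r = dP/dW - dQ/dW$, so that $\int r\,dW = 0$. Using the symmetry of $g(x,y)=\|x-y\|$, it rewrites the left-hand side as $-\iint \|x-y\|\,r(x)\,r(y)\,dW(x)\,dW(y)$, and then invokes the strict negative definiteness of the Euclidean distance, cited from Sz\'ekely--Rizzo. That buys structure: the argument is kernel-agnostic and shows the result depends on a single abstract property of the distance, so it transfers to other kernels or metric spaces wherever that property is available. The cost is that the crux is outsourced, and slightly overstated. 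The cited Proposition concerns finitely many points, but the proof needs the integral form, including the claim that equality forces $r=0$ $W$-a.s. The non-strict inequality does pass from finite sums to measures by approximation, but strictness does not survive limits, so the equality case needs more (this is the distinction between strict and strong negative type). Your identity $\mathcal{E}(P,Q) = c_d^{-1}\int_{\mathbb{R}^d} |\varphi_P(t)-\varphi_Q(t)|^2\,\|t\|^{-(d+1)}\,dt$ supplies exactly this measure-level statement in $\mathbb{R}^d$; it is the standard way that statement is proved. It gives nonnegativity and the equality case together, via continuity and the uniqueness theorem for characteristic functions, and it needs no Radon--Nikodym derivatives. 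Your piece-by-piece use of Tonelli, which relies on each of the three integrands being nonnegative with finite integral under the first-moment assumption, is the right justification for the interchange. The trade-off is that your argument is tied to Fourier analysis on $\mathbb{R}^d$, while the paper's reduction is more general in form but less self-contained.
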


The proof begins by recalling the notion of a negative definite kernel.
\newline

\begin{definition}[Negative definite kernel]
Let $\mathcal{X}$ be a nonempty set. A symmetric function
\[
g:\mathcal{X}\times\mathcal{X}\to\mathbb{R}
\]
is called \emph{negative definite} if for every $n\in\mathbb{N}$, every choice of points 
$x_1,\dots,x_n\in \mathcal{X}$, and every set of real coefficients $r_1,\dots,r_n$ satisfying
\[
\sum_{j=1}^n r_j = 0,
\]
the inequality
\[
\sum_{j=1}^n \sum_{k=1}^n r_j r_k\, g(x_j,x_k) \;\leq\; 0
\]
holds.
\end{definition}

\vspace{0.5em}

\begin{definition}[Strictly negative definite kernel]
A kernel $g$ is said to be \emph{strictly negative definite} if it is negative definite and the inequality above is strict whenever the coefficients $(r_1,\dots,r_n)$ are not identically zero.
\end{definition}

\vspace{0.5em}

\begin{proposition}
The Euclidean distance
\[
g(x,y) = \|x-y\|, \qquad x,y\in \mathbb{R}^d,
\]
is a strictly negative definite kernel. This is proved in the Appendix of \citep{szekely2005new}.
\end{proposition}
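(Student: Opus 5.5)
The approach is to write $\|x-y\|$ as an integral of the form $1-\cos$ against a positive measure. This turns the quadratic form in the definition of a negative definite kernel into minus an integral of a squared modulus. Strictness then reduces to linear independence of complex exponentials.

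\emph{Step 1 (integral representation).} For $x\in\mathbb{R}^d$, set
\[
I(x)=\int_{\mathbb{R}^d}\frac{1-\cos\langle t,x\rangle}{\|t\|^{d+1}}\,dt .
\]
This integral is finite. Near $t=0$ the integrand is $O(\|t\|^{2}\|x\|^2/\|t\|^{d+1})=O(\|t\|^{1-d})$, which is integrable in $\mathbb{R}^d$. For large $\|t\|$ it is bounded by $2/\|t\|^{d+1}$, which is also integrable. The function $I$ is invariant under rotations, because the measure $dt/\|t\|^{d+1}$ is. Substituting $t=s/\lambda$ gives $I(\lambda x)=\lambda I(x)$ for $\lambda>0$. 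Hence $I(x)=C_d\|x\|$, where $C_d=I(e_1)$. Moreover $C_d>0$, since the integrand is nonnegative and positive on a set of positive measure. The exact value of $C_d$ (namely $\pi^{(d+1)/2}/\Gamma((d+1)/2)$) is not needed. This step is the main technical point, but only finiteness and positivity of $C_d$ are actually used.

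\emph{Step 2 (negative definiteness).} Take points $x_1,\dots,x_n$ and reals $r_j$ with $\sum_j r_j=0$. Apply Step 1 to each $x_j-x_k$ and swap the finite sum with the integral. The constant term contributes $\sum_{j,k}r_jr_k=(\sum_j r_j)^2=0$. The cosine term satisfies $\sum_{j,k}r_jr_k\cos\langle t,x_j-x_k\rangle=\bigl|\sum_j r_je^{i\langle t,x_j\rangle}\bigr|^2$, since the $r_j$ are real. Writing $\varphi(t)=\sum_j r_je^{i\langle t,x_j\rangle}$, this gives
\[
\sum_{j,k} r_jr_k\|x_j-x_k\| = -\frac{1}{C_d}\int_{\mathbb{R}^d}\frac{|\varphi(t)|^2}{\|t\|^{d+1}}\,dt\;\le\;0 .
\]

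\emph{Step 3 (strictness).} The word ``strictly'' must be read with the points $x_1,\dots,x_n$ distinct. With repeated points the claim fails, for example $x_1=x_2$ with $r=(1,-1)$. Suppose the integral in Step 2 vanishes. Then $\varphi=0$ almost everywhere, and since $\varphi$ is continuous, $\varphi\equiv0$.

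Choose a unit vector $u$ such that the projections $a_j=\langle u,x_j\rangle$ are pairwise distinct. Such a $u$ exists because only finitely many hyperplanes must be avoided. Then $\psi(s)=\varphi(su)=\sum_j r_je^{isa_j}\equiv0$. For each $k$, averaging gives
\[
\frac{1}{2T}\int_{-T}^{T}\psi(s)e^{-isa_k}\,ds\;\to\; r_k \quad\text{as } T\to\infty,
\]
because the terms with $j\ne k$ average to $0$. Since $\psi\equiv0$, the left-hand side is $0$ for every $T$. Hence every $r_k=0$, which proves the inequality is strict for nonzero coefficients and establishes the Proposition.
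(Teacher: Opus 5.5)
Your proof is correct and follows essentially the same route as the paper, which gives no argument of its own and defers to the Sz\'ekely--Rizzo appendix; that argument rests on the same identity $\int_{\mathbb{R}^d}(1-\cos\langle t,x\rangle)\,\|t\|^{-(d+1)}\,dt=C_d\|x\|$, which turns the quadratic form into $-C_d^{-1}\int_{\mathbb{R}^d}|\varphi(t)|^2\|t\|^{-(d+1)}\,dt$, and your rotation-and-scaling argument replaces the explicit evaluation $C_d=\pi^{(d+1)/2}/\Gamma((d+1)/2)$. Your caveat that strictness must be read for distinct points $x_1,\dots,x_n$ is also right and needed: with $x_1=x_2$ and $r=(1,-1)$ the form equals $g(x_1,x_1)(r_1+r_2)^2=0$ for every kernel, so the paper's definition as literally stated cannot be satisfied by any kernel.
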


\paragraph{Interpretation.} Proposition 1 asserts that for any finite collection of points 
$x_1,\dots,x_n\in \mathbb{R}^d$ and any coefficients $r_1,\dots,r_n\in \mathbb{R}$ with 
$\sum_{j=1}^n r_j=0$, one has
\[
\sum_{j=1}^n \sum_{k=1}^n r_j r_k \,\|x_j-x_k\| \;<\; 0,
\]
unless $r_1=\cdots=r_n=0$. 
By definition, it is not hard to derive that
\[
\sum_{j=1}^n \sum_{k=1}^n r_j r_k \,\|x_j-x_k\| \;\leq\; 0,
\]
whenever $\sum_{j=1}^n r_j=0$, where the equality holds if and only if $r(x) = 0$.
This property establishes that the Euclidean distance, when viewed as a kernel, induces quadratic forms that are nonpositive under zero-sum weighting and strictly negative unless the weighting is trivial. This structural property is the key ingredient in the derivation of the energy-distance between probability measures, which underlies Corollary~1.

\vspace{0.5em}

\begin{theorem}
\label{thm:1}
For any two independent random variables $X \sim P$ and $Y \sim Q$, we have
\[
2\mathbb{E}\,[g(X,Y)] - \mathbb{E}\,[g(X,X')] - \mathbb{E}\,[g(Y,Y')] \;\geq\; 0,
\]
where $g$ is the Euclidean distance, $X'$ and $Y'$ are independent copies of $X$ and $Y$, respectively. The equality holds if and only if $P=Q$.
\end{theorem}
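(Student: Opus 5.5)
We assume $\mathbb{E}\|X\|<\infty$ and $\mathbb{E}\|Y\|<\infty$; otherwise the expectations in the statement are not all finite. Write $\mu = P - Q$. This is a finite signed measure with total mass $\mu(\mathbb{R}^d)=0$. Expanding the double integral shows the left-hand side of Theorem~\ref{thm:1} equals
\[
-\int\!\!\int \|x-y\|\, d\mu(x)\, d\mu(y).
\]
This is the continuous analogue of the quadratic form $\sum_{j,k} r_j r_k \|x_j-x_k\|$ with zero-sum weights $r_j$. The plan has two parts: first establish nonnegativity using the negative definiteness from the Proposition, then handle the equality case with a Fourier representation. The Proposition only covers finitely supported weights, so it cannot on its own give the characterization of equality for general $P,Q$.

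\emph{Nonnegativity.} Draw i.i.d.\ samples $X_1,\dots,X_n\sim P$ and $Y_1,\dots,Y_n\sim Q$. Put weight $r=1/n$ on each $X_j$ and $r=-1/n$ on each $Y_j$; these $2n$ weights sum to zero. The Proposition then gives, for every realization,
\[
\frac{2}{n^2}\sum_{j,k}\|X_j-Y_k\| - \frac{1}{n^2}\sum_{j,k}\|X_j-X_k\| - \frac{1}{n^2}\sum_{j,k}\|Y_j-Y_k\| \;\ge\; 0 .
\]
Take expectations. The diagonal terms $j=k$ in the two self-sums vanish, and the off-diagonal terms have expectations $\mathbb{E}\|X-X'\|$ and $\mathbb{E}\|Y-Y'\|$. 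The expected left-hand side is therefore
\[
2\mathbb{E}\|X-Y\| - \tfrac{n-1}{n}\bigl(\mathbb{E}\|X-X'\|+\mathbb{E}\|Y-Y'\|\bigr) \;\ge\; 0 .
\]
Letting $n\to\infty$ gives the inequality in the theorem. The strictness in the Proposition does not survive this limit, so the ``only if'' direction needs a separate argument.

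\emph{Equality case.} If $P=Q$, the three terms coincide and the expression is trivially $0$. For the converse we use the classical identity, valid for every $x\in\mathbb{R}^d$,
\[
\|x\| = \frac{1}{c_d}\int_{\mathbb{R}^d} \frac{1-\cos\langle t,x\rangle}{\|t\|^{d+1}}\,dt,
\]
where $c_d = \pi^{(d+1)/2}/\Gamma\bigl(\tfrac{d+1}{2}\bigr)$. The integral converges: the integrand is $O(\|t\|^{1-d})$ near $0$ and $O(\|t\|^{-d-1})$ at infinity. The constant comes from rotation invariance plus the scaling $t\mapsto t/\|x\|$. Substitute this identity for each of $\|X-Y\|$, $\|X-X'\|$ and $\|Y-Y'\|$, and exchange expectation and integral by Tonelli, since the integrand is nonnegative. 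Write $\varphi_P,\varphi_Q$ for the characteristic functions, and use $\mathbb{E}\cos\langle t,X-Y\rangle = \mathrm{Re}\,\varphi_P(t)\overline{\varphi_Q(t)}$ together with $\mathbb{E}\cos\langle t, X-X'\rangle=|\varphi_P(t)|^2$ (and likewise for $Y$). The constant terms cancel, and the expression in Theorem~\ref{thm:1} becomes
\[
\frac{1}{c_d}\int_{\mathbb{R}^d}\frac{|\varphi_P(t)-\varphi_Q(t)|^2}{\|t\|^{d+1}}\,dt .
\]
This representation gives nonnegativity a second time, directly. It also shows that the expression vanishes only if $\varphi_P=\varphi_Q$ almost everywhere. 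Characteristic functions are continuous, so then $\varphi_P=\varphi_Q$ everywhere, and uniqueness of characteristic functions gives $P=Q$.

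The main obstacle is this Fourier step. Two things must be checked. First, the integral identity for $\|x\|$ and its constant. Second, the manipulation that separates the three expectations inside the integral. Each of the three cosine integrals diverges on its own, so the subtraction has to be done inside the integrand before integrating. Concretely, one writes
\[
2\bigl(1-\cos\langle t,X-Y\rangle\bigr)-\bigl(1-\cos\langle t,X-X'\rangle\bigr)-\bigl(1-\cos\langle t,Y-Y'\rangle\bigr)
\]
as a single integrand, and only then takes expectations termwise. This is justified because each of the three pieces separately has finite integral against $\|t\|^{-d-1}\,dt$, by the finite first-moment assumption. Corollary~\ref{cor:1} is then Theorem~\ref{thm:1} with $g(x,y)=\|x-y\|$.
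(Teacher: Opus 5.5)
Your proof is correct, but at the decisive step it takes a different route from the paper. The paper picks a dominating measure $W$, sets $r = d\mu/dW - d\nu/dW$, and rewrites the expression as $-\iint \|x-y\|\,r(x)\,r(y)\,dW(x)\,dW(y)$. It then cites the strict negative definiteness of the Euclidean distance (the Proposition) to conclude in one stroke that this integral is $\le 0$, with equality iff $r=0$ $W$-a.s. The Proposition, however, is a statement about finitely many points, and the paper never derives the passage to general signed measures from it. This matters most for the equality clause, which is exactly the point you flag.

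You handle the two halves separately. For nonnegativity, you apply only the non-strict finite-sum inequality to empirical samples with weights $\pm 1/n$ and let $n\to\infty$. For the equality case, you use the identity $\|x\| = c_d^{-1}\int(1-\cos\langle t,x\rangle)\,\|t\|^{-d-1}\,dt$, which turns the expression into $c_d^{-1}\int|\varphi_P-\varphi_Q|^2\,\|t\|^{-d-1}\,dt$.

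The paper's route is shorter and is phrased for an abstract kernel. It would therefore apply verbatim to any kernel that is strictly negative definite in the integral sense, but that integral strictness is the real content, and it is left to the citation. Your route is specific to the Euclidean distance, yet it is self-contained and in effect proves the lemma the paper invokes.

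The Fourier representation alone already gives both the inequality and the equality case, so your empirical-sampling step is redundant, as you note. Only $0<c_d<\infty$ matters, not its exact value. Your justification for separating the three integrals is sound: each equals $c_d$ times a finite expectation.
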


The proof of Theorem~\ref{thm:1} builds upon the proof of Theorem~1 in \citep{szekely2005new}, presented here in an expanded and more detailed form.

\begin{proof}
Assume the expectations in the statement are finite (this is ensured, e.g., by $\mathbb{E}\|X\|+\mathbb{E}\|Y\|<\infty$ when $g(x,y)=\|x-y\|$). Let $\mu$ and $\nu$ denote the laws of $X$ and $Y$, respectively, and fix a probability measure $W$ dominating both $\mu$ and $\nu$. Define
\[
r(x)\;=\;\frac{d\mu}{dW}(x)-\frac{d\nu}{dW}(x),
\qquad\text{so that}\qquad \int_\mathcal{X} r(x)\,dW(x)=0.
\]
By independence, the joint law of a pair is the product measure of their marginals. Combined with Fubini-Tonelli theorem, the three expectations can be written as:
\[
\mathbb{E}\,[g(X,X')]=\int_\mathcal{X}\int_\mathcal{X} g(x,y)\,d\mu(x)\,d\mu(y),\quad
\mathbb{E}\,[g(Y,Y')]=\int_\mathcal{X}\int_\mathcal{X} g(x,y)\,d\nu(x)\,d\nu(y),
\]
\[
\mathbb{E}\,[g(X,Y)]=\int_\mathcal{X}\int_\mathcal{X} g(x,y)\,d\mu(x)\,d\nu(y).
\]
Since $d\mu = \tfrac{d\mu}{dW}\, dW$ and $d\nu = \tfrac{d\nu}{dW}\, dW$, we can express these as
\[
E[g(X,X')] = \int_\mathcal{X}\int_\mathcal{X} g(x,y)\, \frac{d\mu}{dW}(x)\, \frac{d\mu}{dW}(y)\, dW(x)\, dW(y),
\]
\[
E[g(Y,Y')] = \int_\mathcal{X}\int_\mathcal{X} g(x,y)\, \frac{d\nu}{dW}(x)\, \frac{d\nu}{dW}(y)\, dW(x)\, dW(y),
\]
\[
E[g(X,Y)] = \int_\mathcal{X}\int_\mathcal{X} g(x,y)\, \frac{d\mu}{dW}(x)\, \frac{d\nu}{dW}(y)\, dW(x)\, dW(y).
\]

\medskip
Therefore,
\begin{align*}
& 2E[g(X,Y)] - E[g(X,X')] - E[g(Y,Y')] \\
&= \int_\mathcal{X}\int_\mathcal{X} g(x,y) \Big(
2\frac{d\mu}{dW}(x)\frac{d\nu}{dW}(y)
- \frac{d\mu}{dW}(x)\frac{d\mu}{dW}(y)
- \frac{d\nu}{dW}(x)\frac{d\nu}{dW}(y)
\Big) dW(x)\, dW(y).
\end{align*}

\medskip
Since $g(x,y)$ is symmetric, we may replace the middle term in parentheses by
\[
- \bigg( \frac{d\mu}{dW}(x) - \frac{d\nu}{dW}(x) \bigg)
   \bigg( \frac{d\mu}{dW}(y) - \frac{d\nu}{dW}(y) \bigg).
\]

Thus,
\[
2E[g(X,Y)] - E[g(X,X')] - E[g(Y,Y')]
= - \int_\mathcal{X}\int_\mathcal{X} g(x,y)\, r(x)\, r(y)\, dW(x)\, dW(y).
\]
\medskip
Now set $g(x,y)=\|x-y\|$. By Proposition~1, $g$ is a strictly negative definite kernel on $\mathbb{R}^d$. Therefore, for any $r$ with $\int_\mathcal{X} r(x)\,dW(x)=0$,
\[
\int_\mathcal{X}\int_\mathcal{X} g(x,y)\,r(x)\,r(y)\,dW(x)\,dW(y)\;\le 0,
\]
with equality if and only if $r(x)=0$ $W$-a.s. Consequently,
\[
2\mathbb{E}\,[g(X,Y)]-\mathbb{E}\,[g(X,X')]-\mathbb{E}\,[g(Y,Y')]\;\ge 0,
\]
with equality if and only if $r(x)=0$ $W$-a.s., i.e., $\mu=\nu$ and hence $P=Q$. This proves the theorem.
\end{proof}


\section{Energy-distance loss calculation}
\label{app:num_samples}
In this section, we rewrite the energy-distance in~\ref{eq:energy-distance} in the form of an estimation with a finite number of samples as shown in~\ref{eq:energy-distance-U}, 

\begin{equation}
\label{eq:energy-distance-U}
\mathcal{E}(P, Q) 
= \frac{2}{mn} \sum_{i=1}^m \sum_{j=1}^n \lVert X_i - Y_j \rVert
- \frac{1}{m(m-1)} \sum_{\substack{i,j=1 \\ i \neq j}}^m \lVert X_i - X_j \rVert
- \frac{1}{n(n-1)} \sum_{\substack{i,j=1 \\ i \neq j}}^n \lVert Y_i - Y_j \rVert
\end{equation}
where $m$ is the number of samples drawn from distribution $P$, and $n$ is the number of samples drawn from distribution $Q$. In the context of model training, the term $\lVert Y_i - Y_j \rVert$ is a constant and can be ignored during optimization. Thus, ~\ref{eq:energy-distance-U} can be rewritten into:
\begin{equation}
\label{eq:energy-distance-U-train}
\widetilde{\mathcal{E}}(P, Q) 
= \frac{2}{mn} \sum_{i=1}^m \sum_{j=1}^n \lVert X_i - Y_j \rVert
- \frac{1}{m(m-1)} \sum_{\substack{i,j=1 \\ i \neq j}}^m \lVert X_i - X_j \rVert.
\end{equation}
More specifically, for each data point $y$ drawn from distribution $Q$, the energy-distance can be estimated by drawing $m$ samples $x_1, x_2, \cdots, x_m \sim P_\theta$ and calculating the following equation:
\begin{equation}
\label{eq:energy-distance-pointwise}
\mathcal{L}_{\text{energy}}
= \frac{2}{m} \sum_{i=1}^m \lVert x_i - y \rVert
- \frac{1}{m(m-1)} \sum_{\substack{i,j=1 \\ i \neq j}}^m \lVert x_i - x_j \rVert,
\end{equation}
where $m=2$ reduces to~\ref{eq:training-objective-2}.

\section{Text Embeddings}
Table \ref{tab:text_embed} examines how different text embedding choices affect the performance of our one-step energy-scoring model with representation distillation. The best overall results are achieved when using a combination of CLAP and Flan-T5 embeddings. The model's performance remains strong even when the CLAP embeddings are removed, with only a negligible drop in metrics. This suggests that the framework's training does not significantly benefit from using CLAP embeddings to improve its CLAP metric score. It is important to note that the CLAP model used for training and inference is different. Conversely, the most significant performance drop occurs when only CLAP embeddings are used. In this scenario, the FD and FAD metrics substantially worsen, and KL, IS, and CLAP also degrade. 
\begin{table}[h]
    \caption{Ablation study on the choice of text embeddings.}
    \label{tab:text_embed}
    \centering
    \begin{tabular}{l|ccccc}
        \toprule
        text embeddings & \textbf{FD $\downarrow$} & \textbf{FAD $\downarrow$} & \textbf{KL $\downarrow$} & \textbf{IS $\uparrow$} & \textbf{CLAP $\uparrow$} \\
        \midrule
        CLAP + Flan-T5 & \textbf{18.67} & \textbf{2.79} & \textbf{1.06} & \textbf{9.66} & \textbf{0.334} \\
        only Flan-T5 & 18.79 & 2.76 & 1.08 & 9.57 & 0.331 \\
        only CLAP & 20.10 & 3.21 & 1.22 & 9.01 & 0.307 \\
        \bottomrule
    \end{tabular}
\end{table}

\newpage

\section{Masked Autoregressive Continuous Sampling}
\label{app:marcs}
\begin{figure}[h]
    \begin{center}
    \includegraphics[width=12cm]{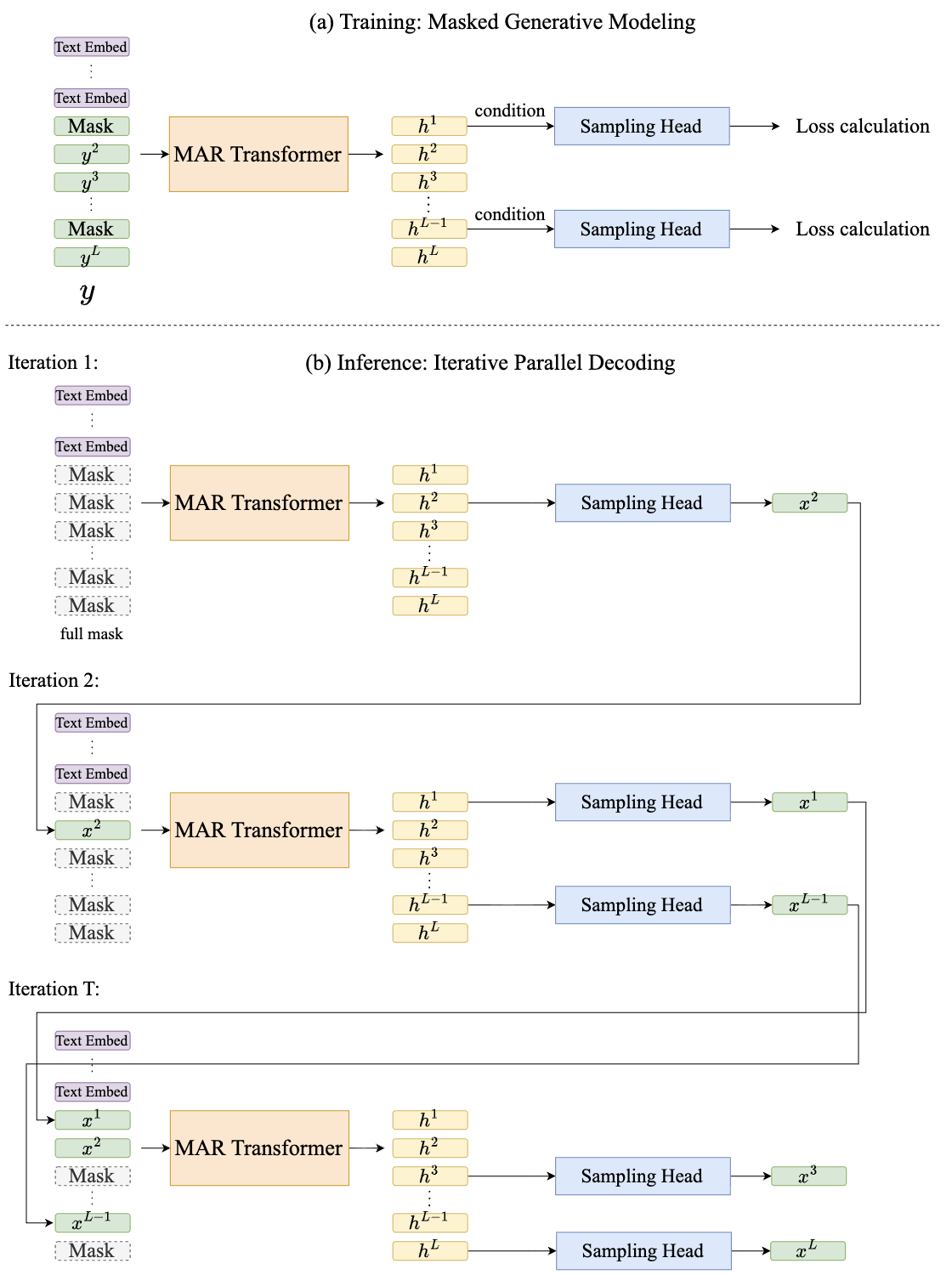}
    \end{center}
    \caption{Illustration of a mask autoregressive continuous sampling framework. (a) Training pipeline with masked generative modeling. (b) Inference pipeline with iterative parallel decoding. }
    \label{fig:marcs}
\end{figure}
\autoref{fig:marcs} illustrates the masked autoregressive continuous sampling framework mentioned in Section~\ref{subsec:mare}. 
As shown in~\autoref{fig:marcs}(a), training is carried out by masked generative modeling, which randomly masks a portion of VAE latents $y$, and makes the framework predict the masked positions, with the loss being the loss of the corresponding sampling method, such as diffusion, flow matching, or energy-scoring.
As shown in  Figure~\ref{fig:marcs}(b), inference is performed by iterative parallel decoding. Starting with a full sequence of mask tokens in the first iteration, a random set of positions is selected to be generated. The generated latents will serve as input during the next iteration. This process repeats until all positions are generated.

\section{Energy-scoring Module}
\label{app:energy-scoring_module}
\vskip -0.08in
\begin{figure}[ht]
    \begin{center}
        \includegraphics[width=14cm]{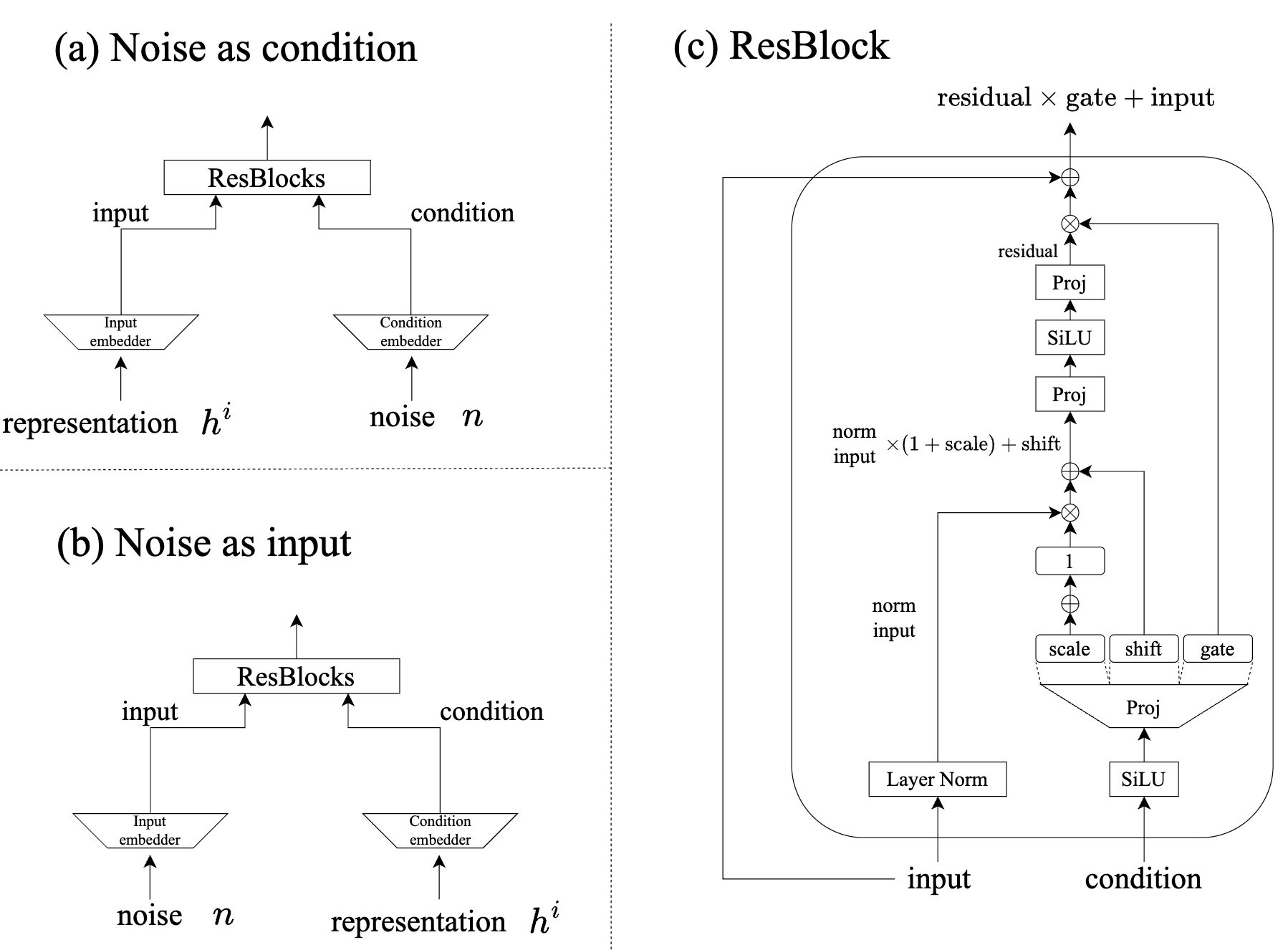}
    \end{center}
    \vskip -0.1in
    \caption{Configurations for the energy-scoring module. (a) Noise as condition. Contextual representation as input. (b) Noise as input. Contextual representation as condition. (c) ResBlock architecture.}
    \label{fig:energy-scoring_module}
\end{figure}
Figure~\ref{fig:energy-scoring_module} depicts the design alternatives for the energy-scoring module, highlighting two different configurations for incorporating noise. In Figure~\ref{fig:energy-scoring_module} (a), the contextual representation $h^i$ is used as the main input to the ResBlocks, while the sampled noise vector $n$ is treated as the conditioning signal, passed into the ResBlocks and incorporated with adaptive layer normalization (ada-LN). In Figure~\ref{fig:energy-scoring_module} (b), the sampled noise vector $n$ is used as the main input to the ResBlocks, while the contextual representation $h^i$ is treated as the conditioning signal, passed into the ResBlocks and incorporated with adaptive layer normalization (ada-LN).
\vskip 0.2in
\begin{table}[h]
    \caption{Ablation study on the configuration of the energy-scoring module.}
    \label{tab:energy-scoring_module}
    \centering
    \begin{tabular}{l|ccccc}
        \toprule
        configuration & \textbf{FD $\downarrow$} & \textbf{FAD $\downarrow$} & \textbf{KL $\downarrow$} & \textbf{IS $\uparrow$} & \textbf{CLAP $\uparrow$} \\
        \midrule
        (a) Noise as condition & 28.32 & 4.95 & 1.31 & 7.19 & 0.265 \\
        (b) Noise as input & \textbf{22.09} & \textbf{3.82} & \textbf{1.22} & \textbf{8.07} & \textbf{0.298} \\
        \bottomrule
    \end{tabular}
\end{table}

Table~\ref{tab:energy-scoring_module} ablates the different designs for the configuration of the energy-scoring module (no distillation techniques are applied). Using noise as the primary input (configuration (b)) consistently outperforms the alternative of treating noise as a conditioning signal (configuration (a)) across all evaluation metrics. Specifically, configuration (b) achieves substantially lower FD (22.09 vs. 28.32) and FAD (3.82 vs. 4.95), alongside improvements in KL divergence and CLAP similarity, indicating both better fidelity and stronger semantic alignment.  These results confirm that structuring the module with noise as the main input while leveraging contextual representations as the conditioning pathway yields a more effective mapping from noise to audio latents, thereby improving one-step TTA generation quality.

\newpage

\section{Classifier‑Free Guidance in Representation Space}
\label{app:cfg}
\subsection{Representation-level CFG}
During inference, to achieve classifier-free guidance (CFG; \citealt{cfg}), we combine the text-conditioned representations with the null-conditioned representations with a CFG scaling value \cite{ma2025efficient}. 
More specifically, we compute two versions of this representation: a conditional one, $h^{i}_{\text{cond}}$, obtained by forwarding the text embeddings $z_{\text{text}}$ and audio latents into the transformer backbone $\text{Enc}_{\phi}$ and an unconditional one $h^{i}_{\text{uncond}}$, where the text pathway is replaced by null text embeddings $z_\emptyset$.
\begin{equation}
\label{eq:cfg}
h^i = \text{CFG} \cdot h^{i}_{\text{cond}} + (1 - \text{CFG}) \cdot h^{i}_{\text{uncond}},
\end{equation}
where $h^{i}_{\text{cond}} = \text{Enc}_{\phi}(x, z_{\text{text}})^i$, $h^{i}_{\text{uncond}} = \text{Enc}_{\phi}(x, z_\emptyset)^i$, and $x$ denote the audio latent sequence generated during the inference phase. The advantage of performing CFG at the representation-level but not at the audio-latent-level is that this eliminates the need to forward the energy-scoring head $F_\theta$ twice to produce conditional and unconditional outputs.

\subsection{Efficacy of Representation-level CFG on Multi-step Diffusion Baselines}

To disentangle the effectiveness of different CFG settings from model performance, we conducted an ablation study applying representation-level guidance to the IMPACT teacher model. We replaced IMPACT’s original noise-prediction-level CFG with representation-level CFG and evaluated performance across varying CFG scales. The results are summarized in Table~\ref{tab:trans-cfg}. 

\begin{table}[h]
  \centering
    \caption{Ablation study on IMPACT with CFG applied at different levels of output.}
    \label{tab:trans-cfg}
    \begin{tabular}{cc|ccccc}
      \toprule
      Setting & CFG & \textbf{FD $\downarrow$} & \textbf{FAD $\downarrow$} & \textbf{KL $\downarrow$} & \textbf{IS $\uparrow$} & \textbf{CLAP $\uparrow$} \\
      \midrule
      (a) Noise-prediction-level & 5.0 & \textbf{15.25} & \textbf{1.26} & \textbf{1.06} & \textbf{10.57} & \textbf{0.372} \\
      (b) No CFG & 1.0 & 22.42 & 2.96 & 1.42 & 6.84 & 0.269 \\
      \midrule
      (c) Representation-level & 1.1 & 21.00 & 2.55 & 1.34 & 7.07 & 0.282 \\
      (d) Representation-level & 1.5 & 17.12 & 1.93 & 1.20 & 8.37 & 0.313 \\
      (e) Representation-level & 2.0 & 16.13 & 1.70 & 1.18 & 9.16 & 0.327 \\
      (f) Representation-level & 2.5 & 16.94 & 1.69 & 1.23 & 8.92 & 0.321 \\
      (g) Representation-level & 3.0 & 26.73 & 3.08 & 1.65 & 8.46 & 0.269 \\
      (h) Representation-level & 4.0 & 115.96 & 20.71 & 1.65 & 8.46 & 0.055 \\
      (i) Representation-level & 5.0 & 214.17 & 49.22 & 4.51 & 1.17 & -0.042 \\
      \bottomrule
    \end{tabular}
\end{table}

As shown in Table~\ref{tab:trans-cfg}, IMPACT achieves its peak performance with its original noise-prediction-level CFG at a scale of 5.0 (Row (a)). When utilizing representation-level CFG (Rows (c) to (i)), performance degrades significantly. Specifically, scales between 1.1 and 3.0 yield suboptimal results compared to the baseline, and scales above 3.0 lead to severe degradation in generation quality, with FD increasing from 15.25 to 214.17 at a scale of 5.0. We attribute this degradation to two primary factors inherent to multi-step diffusion models:
\paragraph{Non-linearity of the Diffusion Head} The diffusion head $F_\theta$ in IMPACT consists of non-linear components (Ada-LN and MLPs with non-linear activation functions). Consequently, linear interpolation of the input representations does not translate to linear interpolation of the outputs. Formally, for a guidance scale $\text{CFG}$ and input $z_t$, where $z_t$ is the output of the $t^\text{th}$ diffusion sampling step:

\begin{equation}
    F_\theta(z_t, \text{CFG} \cdot h_{\text{cond}} + (1-\text{CFG}) \cdot h_{\text{uncond}}) \neq \text{CFG} \cdot F_\theta(z_t, h_{\text{cond}}) + (1-\text{CFG}) \cdot F_\theta(z_t, h_{\text{uncond}}) 
\end{equation}

The discrepancy between the representation-level CFG output (LHS) and the standard noise-prediction-level CFG output (RHS) introduces errors at every step. 

\paragraph{Error Accumulation in Iterative Sampling}
Unlike \audiodear, which generates audio in a single step, IMPACT requires sampling latents over 100 sampling steps. The discrepancy described above accumulates throughout the recursive sampling process, causing the sampling process to diverge from the target data manifold at higher CFG scales.

This comparison underscores a key advantage of the proposed \audiodear framework: its single-step generation avoids sampling error accumulation, allowing it to leverage representation-level CFG effectively where multi-step diffusion baselines cannot.


\section{Subjective Evaluation}
\label{app:subj_eval}
In this section, we present the results of a subjective evaluation of text-relevance (REL) and overall audio quality (OVL) on 90 AudioCaps samples from the evaluation set. We compare our proposed \audiodear framework with prior few-step sampling baselines and the state-of-the-art IMPACT system. Table \ref{tab:rel-ovl} reports mean ratings along with their standard deviations, standard errors, and 95\% confidence intervals.

\vskip 0.2in
\begin{table}[h]
\setlength\tabcolsep{5 pt}
\caption{Performance and statistical values for the text-relevance (REL)
and overall audio quality (OVL) metrics on 90 audio samples with text prompts sampled from the AudioCaps evaluation set. ``stdev'' stands for standard deviation. ``stderr" stands for standard error. ``CI'' stands for confidence intervals.}
\label{tab:rel-ovl}
\vskip 0.1in
\centering
\begin{tabular}{lcccccccc}
\toprule
& \multicolumn{4}{c}{REL} & \multicolumn{4}{c}{OVL} \\
\cmidrule(lr){2-5}\cmidrule(lr){6-9}
Method & mean & stdev & stderr & CI & mean & stdev & stderr & CI \\
\midrule
Ground Truth & 4.45 & 0.27 & 0.09 & [4.28, 4.62] & 3.68 & 0.24 & 0.08 & [3.53, 3.83] \\
\midrule
Tango 2 & 4.07 & 0.26 & 0.08 & [3.91, 4.23] & 3.42 & 0.28 & 0.09 & [3.25, 3.59]\\
EzAudio-XL & 4.03 & 0.25 & 0.08 & [3.88, 4.18] & 3.31 & 0.23 & 0.07 & [3.17, 3.45] \\
\midrule
IMPACT & 4.38 & 0.31 & 0.10 & [4.19, 4.57] & 3.47 & 0.29 & 0.09 & [3.29, 3.65]\\
\midrule
ConsistencyTTA & 3.92 & 0.17 & 0.05 & [3.81, 4.03] & 3.01 & 0.21 & 0.07 & [2.88, 3.14] \\
AudioLCM       & 3.67 & 0.33 & 0.10 & [3.47, 3.87] & 3.05 & 0.21 & 0.07 & [2.92, 3.18] \\
\audiodear     & 4.27 & 0.14 & 0.04 & [4.18, 4.36] & 3.27 & 0.19 & 0.06 & [3.15, 3.39] \\
\bottomrule
\end{tabular}
\end{table}
\vskip 0.2in

Among the existing few-step sampling models, \audiodear attains a REL score of 4.27, nearly closing the gap to IMPACT while clearly outperforming other baselines. In particular, \audiodear surpasses ConsistencyTTA (3.92) and AudioLCM (3.67) in text-relevance by large margins, with confidence intervals that do not overlap. This indicates that incorporating an energy-scoring objective with representation-level distillation substantially yields good semantic consistency with the conditioning text.

For perceived audio quality, IMPACT again leads with a mean OVL score of 3.47. \audiodear achieves 3.27, outperforming both ConsistencyTTA (3.01) and AudioLCM (3.05). While a modest gap remains relative to IMPACT, the statistical bounds confirm that \audiodear yields consistently higher perceptual quality than other few-step sampling methods, validating the effectiveness of our one-step synthesis design. Importantly, this gain is achieved while retaining a one-step sampling budget, offering a significantly faster alternative to multi-step autoregressive diffusion models.

\newpage

\section{Toy Example for Different Continuous Sampling Methods}
\label{app:toy-example}
\begin{figure}[ht]
    \centering
    \begin{subfigure}[b]{0.4\textwidth}
        \centering
        \includegraphics[width=\textwidth]{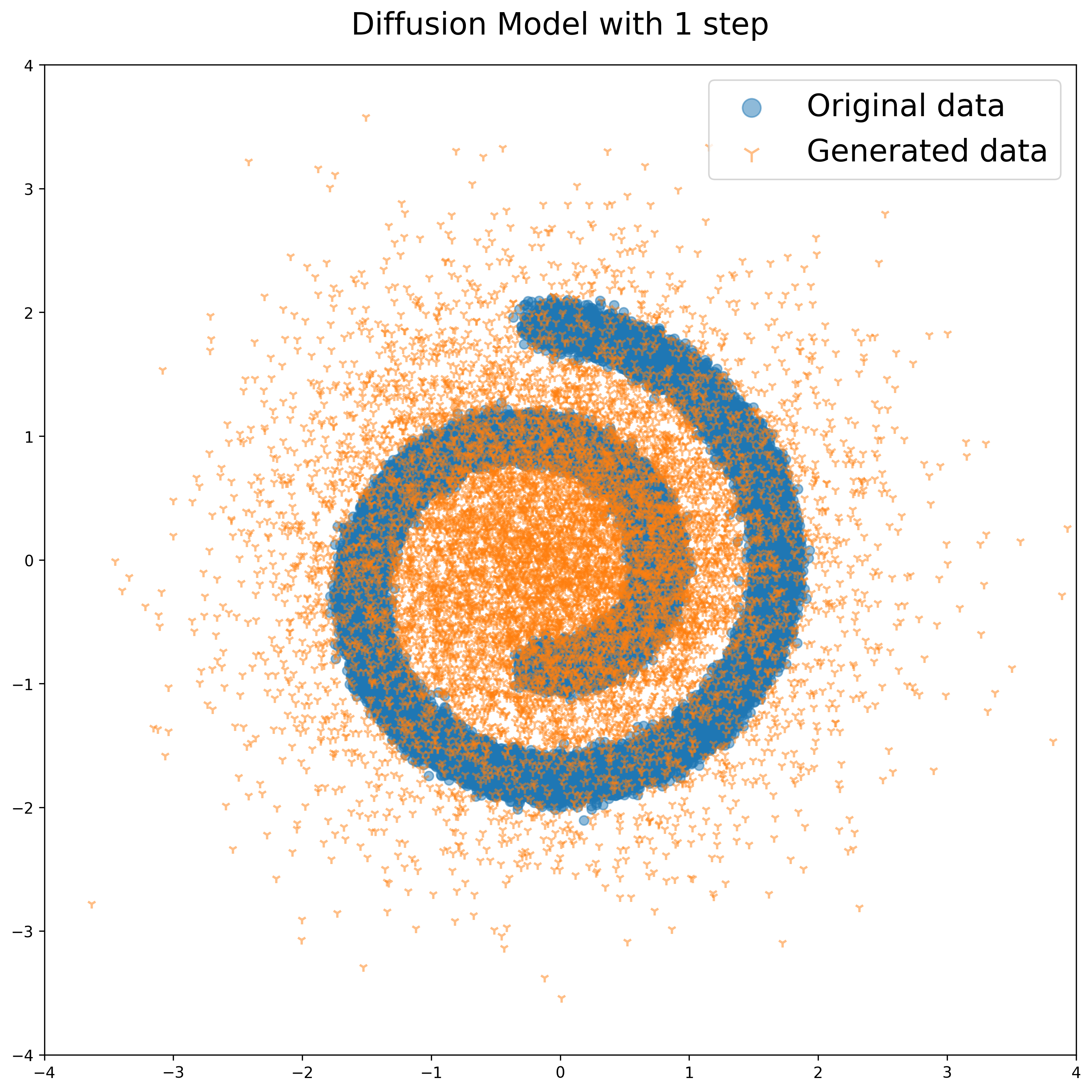}
        \caption{Diffusion, MMD: 0.017609, WSD: 0.186989}
        \label{fig:diffusion}
    \end{subfigure}
    \hspace{1cm}
    \begin{subfigure}[b]{0.4\textwidth}
        \centering
        \includegraphics[width=\textwidth]{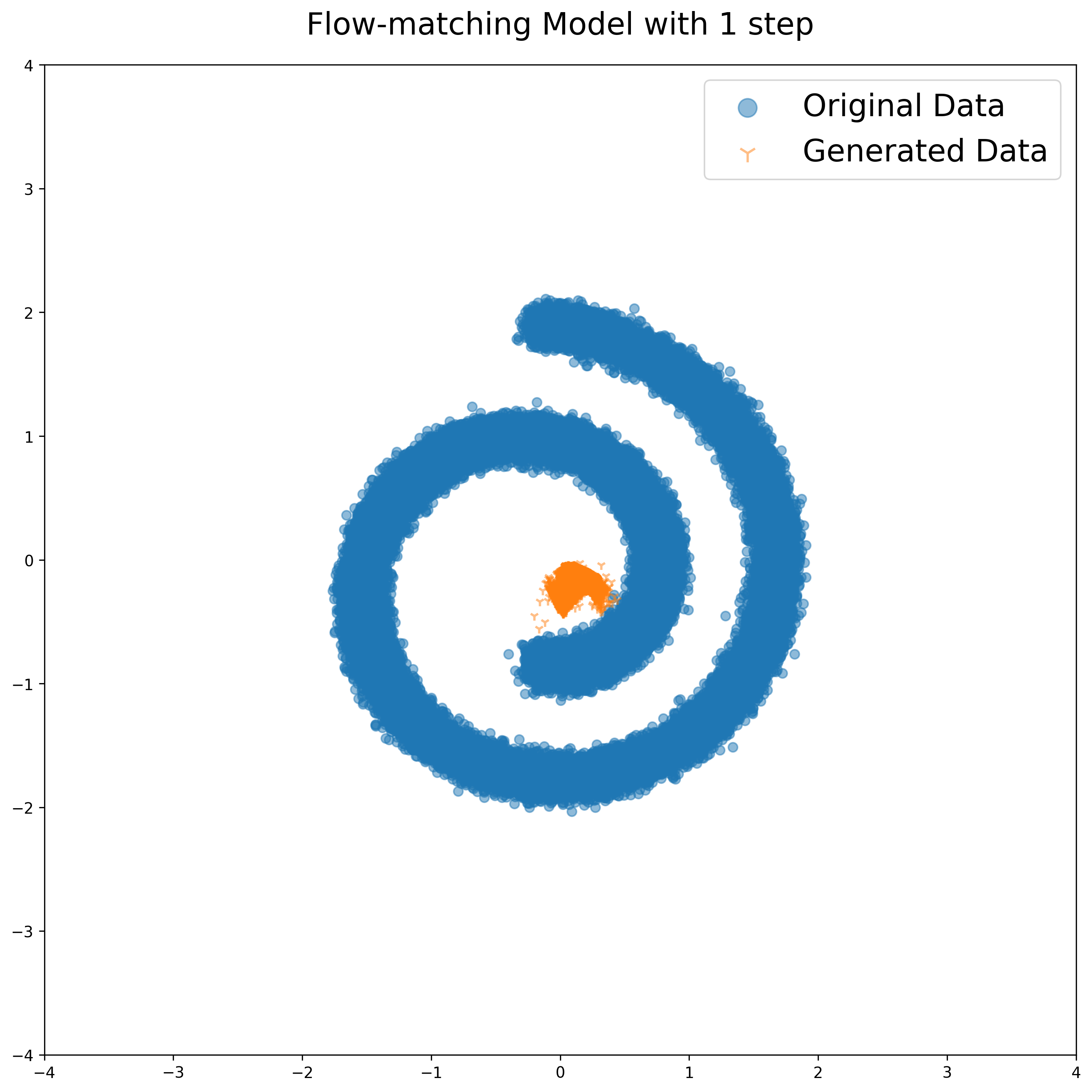}
        \caption{Flow-matching,  MMD: 0.439545, WSD: 0.804978}
        \label{fig:flow-matching}
    \end{subfigure}
    \vskip\baselineskip
    \begin{subfigure}[b]{0.31\textwidth}
        \centering
        \includegraphics[width=\textwidth]{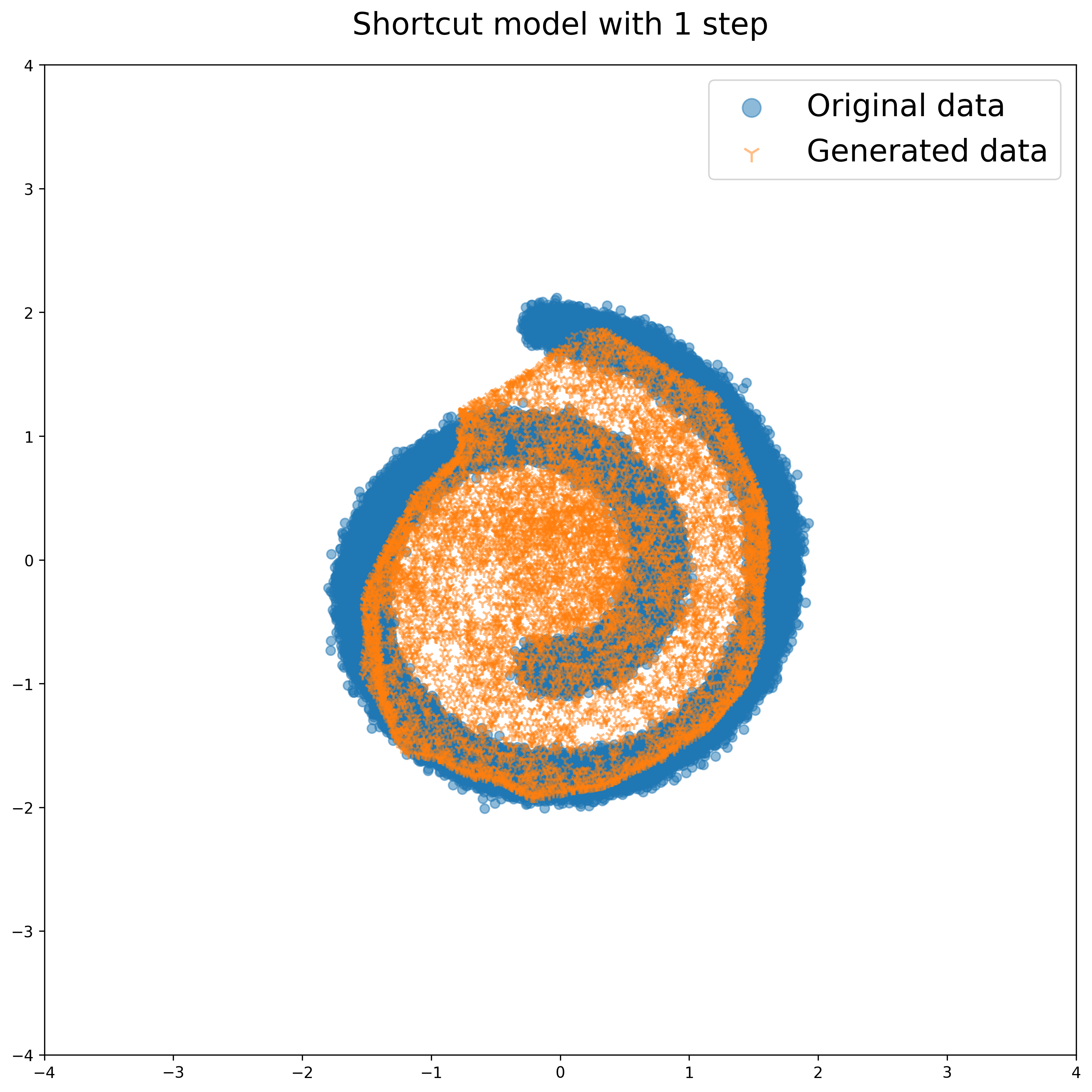}
        \caption{Shortcut, MMD: 0.014756, WSD: 0.139672}
        \label{fig:shortcut}
    \end{subfigure}
    \hspace{0.2cm}
    \begin{subfigure}[b]{0.31\textwidth}
        \centering
        \includegraphics[width=\textwidth]{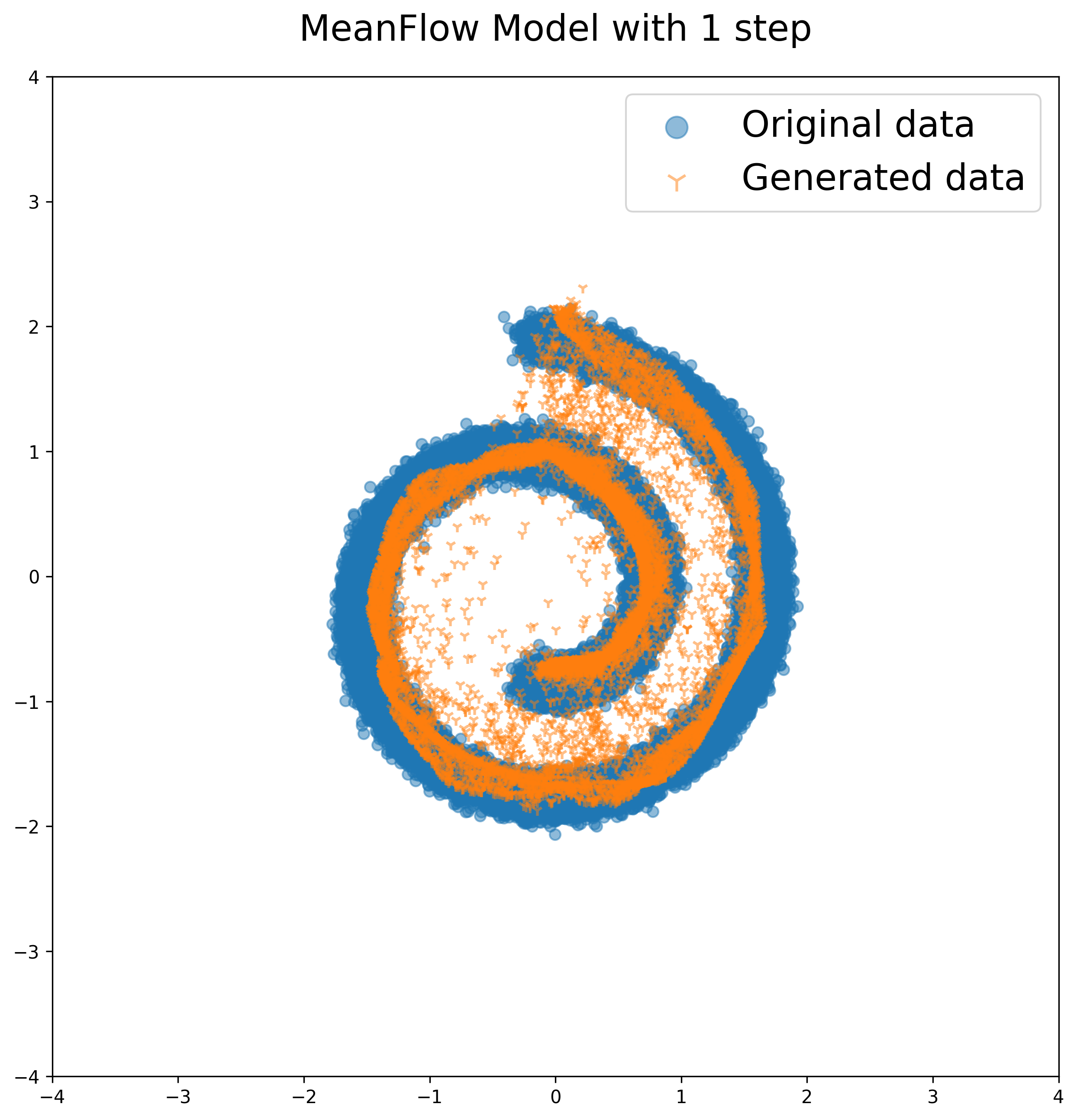}
        \caption{MeanFlow, MMD: 0.002184, WSD: 0.068466}
        \label{fig:meanflow}
    \end{subfigure}
    \hspace{0.2cm}
    \begin{subfigure}[b]{0.31\textwidth}
        \centering
        \includegraphics[width=\textwidth]{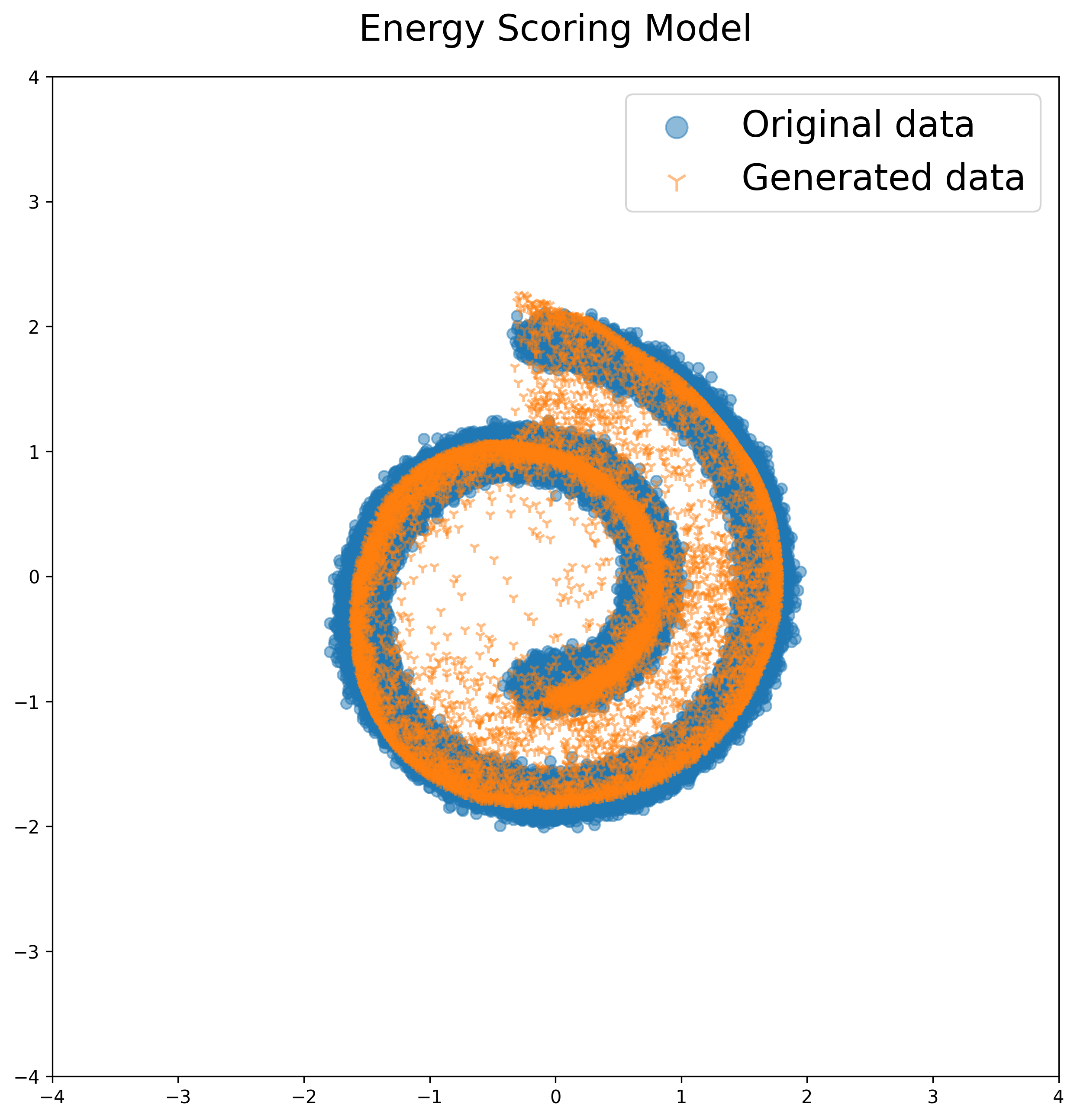}
        \caption{Energy-scoring, MMD: \textbf{0.000523}, WSD: \textbf{0.042155}}
        \label{fig:energy-scoring}
    \end{subfigure}

    \caption{Comparisons of different continuous sampling methods with a toy example of a Swiss roll. Maximum mean discrepancy (MMD, $\downarrow$) and Wasserstein distance  (WSD, $\downarrow$) are used to measure the distribution-wise difference between the original data and the generated data for each model.}
    \label{fig:models-comparison}
\end{figure}
To elucidate the distinctions between alternative one-step continuous sampling approaches, we present a toy experiment. Figure \ref{fig:models-comparison} reports both qualitative and quantitative comparisons across different methods: Diffusion \citep{ho2020denoising}, Flow matching \citep{lipmanflow}, Shortcut \citep{shortcut}, MeanFlow \citep{meanflow}, and our proposed Energy-scoring method. We adopt the Swiss roll dataset, where the ground-truth original data distribution is shown in blue and the generated samples are shown in orange. This visualization highlights how closely each method recovers the underlying geometry of the data manifold. Beyond qualitative inspection, we quantitatively assess distributional fidelity using two widely recognized metrics: maximum mean discrepancy (MMD) and Wasserstein distance (WSD). In both cases, lower values indicate a tighter alignment between the synthetic (orange) and real (blue) distributions.

The one-step diffusion method results in the generated distribution resembling the source Gaussian distribution.
The one-step flow matching method results in the mean point of the target distribution, because the starting points of the ODE tend to have the directions of the velocity pointing to the mean of the target distribution.
The one-step Shortcut method results in a distribution with a contour similar to the target distribution, but fails to model the target distribution accurately. 
The one-step MeanFlow method and our Energy-scoring method both generate data with a shape similar to the original data distribution, having sharper alignment with the spiral geometry. However, comparing Figure~\ref{fig:models-comparison}(d) and Figure~\ref{fig:models-comparison}(e), it is shown that MeanFlow fails to sufficiently cover the full spread of the original data, while our Energy-scoring method has broader coverage of the spiral data distribution. The MMD and WSD metrics also verify that our Energy-scoring method aligns better with the original data.

\section{Dataset Information}
\label{app:data}

\begin{table}[htbp]
\centering
\caption{Training data of each text-to-audio generation model. Any dataset that is involved during any training phase, including pre-training and fine-tuning, will be listed out in this table, regardless of whether the full set of the dataset is used.}
\label{tab:data_info}
\small
\setlength{\tabcolsep}{4pt}
\begin{tabular}{llc}
\toprule
& Models & \textbf{Data Configuration} \\
\midrule
&Tango-full-ft & AS+AC+FS+BBC+US+MI+MC+GMG+ESC50 \\
&Tango-AF\&AC-FT-AC & AFAS+AC \\
&Tango 2 & AS+AC+FS+BBC+US+MI+MC+GMG+ESC50+AA \\
& TangoFlux & AC+WC\\
&EzAudio-L (24kHz) & AS+AACD+ASQC+ASSLGC+AC \\
&EzAudio-XL (24kHz) & AS+AACD+ASQC+ASSLGC+AC \\
&MAGNET-L & AS+BBC+AC+Cv2+VGG+FSD50K+FTUS+SGE+WSE+PM \\
&Make-an-Audio 2 & AS+AC+WC+AASE+ASTK+ESC50+FSD50K+MACS+ES+US+WT+TUT \\
&AudioLDM2-full & AS+AC+WC+VGG+FMA+MSD+LJS+GGS \\
& AudioMNTP & AC+WC \\
& IMPACT & AC+WC \\
& ConsistencyTTA & AC \\
& AudioLCM & AS+AC+WC+AASE+ASTK+ESC50+FSD50K+MACS+ES+US+WT+TUT \\
& AudioTurbo & AC+MACS+Cv2+ESC50+US+MI+GMG+WC\\
& \audiodear & AC+WC+AS \\
\bottomrule
\end{tabular}
\end{table}

\vspace{1cm}
\noindent\textbf{Dataset Abbreviations:}
\begin{itemize}
    \item \textbf{AA:} Audio-alpaca \footnote{\url{https://huggingface.co/datasets/declare-lab/audio-alpaca}}
    \item \textbf{AACD:} Auto-ACD \citep{autoacd}
    \item \textbf{AASE:} Adobe Audition Sound Effects \footnote{\url{https://www.adobe.com/products/audition/offers/adobeauditiondlcsfx.html}}
    \item \textbf{AC:} AudioCaps \cite{kim2019audiocaps}
    \item \textbf{AFAS:} AF-AudioSet
    \item \textbf{AS:} AudioSet \citep{audioset}
    \item \textbf{ASQC:} AS-Qwen-Caps
    \item \textbf{ASSLGC:} AS-SL-GPT4-Caps
    \item \textbf{ASTK:} Audiostock \footnote{\url{https://audiostock.net/}}
    \item \textbf{BBC:} BBC sound effects
    \item \textbf{Cv2:} Clotho v2 \citep{clotho}
    \item \textbf{ES:} Epidemic Sound \footnote{\url{https://www.epidemicsound.com/}}
    \item \textbf{ESC50:} Environmental Sound Classification \citep{esc50}
    \item \textbf{FMA:} Free Music Archive \citep{fma}
    \item \textbf{FS:} Freesound Dataset \footnote{\url{https://freesound.org/}}
    \item \textbf{FSD50K:} Freesound Dataset 50k citep{fonseca2021fsd50k} \footnote{\url{https://zenodo.org/records/4060432}}
    \item \textbf{FTUS:} Free To Use Sounds
    \item \textbf{GGS:} GigaSpeech \citep{gigaspeech}
    \item \textbf{GMG:} Gtzan Music Genre
    \item \textbf{LJS:} LJSpeech \footnote{\url{https://keithito.com/LJ-Speech-Dataset/}}
    \item \textbf{MACS:} MACS \citep{macs}
    \item \textbf{MC:} MusicCaps
    \item \textbf{MI:} Musical Instrument
    \item \textbf{MSD:} Million Song Dataset \citep{msd}
    \item \textbf{PM:} Paramount Motion
    \item \textbf{SGE:} Sonniss Game Effects
    \item \textbf{TUT:} TUT acoustic scene \cite{mesaros2016tut}
    \item \textbf{US:} Urban Sound \citep{urbansound}
    \item \textbf{VGG:} VGG-Sound
    \item \textbf{WC:} WavCaps \citep{wavcaps}
    \item \textbf{WSE:} WeSoundEffects
    \item \textbf{WT:} WavText5Ks \citep{wavtext5k}
\end{itemize}

\newpage
\section{Overall Structure}
\label{app:overall_struc}
\begin{figure}[t]
    \begin{center}
    \includegraphics[width=\columnwidth]{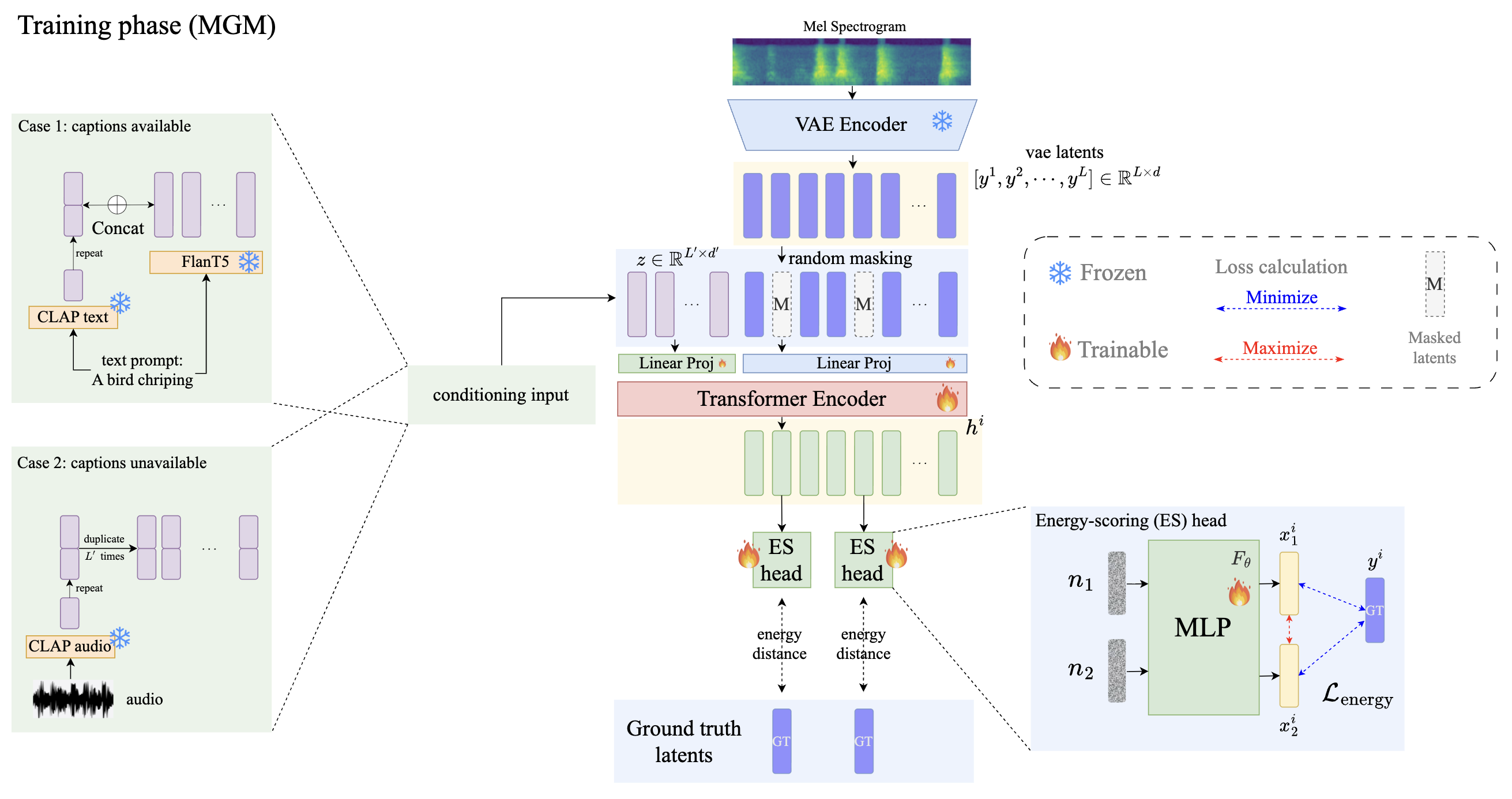}
    \end{center}
    \caption{Illustration of the training framework with masked generative modeling by energy-scoring.}
    \label{fig:mgm}
\end{figure}
As shown in Figure~\ref{fig:mgm}, during training, the transformer receives two types of inputs: conditioning embeddings and VAE latents. The VAE latents are produced by encoding Mel spectrograms with a pre-trained VAE encoder from \citep{audioldm}. The conditioning embeddings are constructed differently depending on the caption availability for each audio sample. To clearly describe this process, we distinguish between two cases in Section~\ref{app:cond_embed}.
\subsection{Conditioning Embeddings}
\label{app:cond_embed}
\paragraph{Case 1: Captioned audio (AudioCaps and WavCaps).}
 \noindent When captions are available,  we use both the CLAP text encoder and the Flan-T5 encoder. The CLAP text encoder outputs a single 512-dimensional embedding, whereas the Flan-T5 encoder outputs 77 embeddings of dimension 1024. To align these representations, we repeat the CLAP text embedding once along its embedding dimension, producing a 1024-dimensional vector. Concatenating this repeated CLAP embedding with the Flan-T5 embeddings yields a conditioning sequence of length 78. 

\paragraph{Case 2: Uncaptioned audio (AudioSet).}
\noindent When captions are unavailable, we still maintain a conditioning sequence of length 78. In this case, a single 512-dimensional CLAP audio embedding is extracted for each audio clip and expanded to 1024 dimensions by repeating it once along the sequence length dimension. This 1024-dimensional vector is duplicated 78 times to form the conditioning sequence.

\subsection{Training - Masked Generative Modeling}
\noindent As shown in Figure~\ref{fig:mgm}, during masked generative modeling, a subset of the VAE latents is randomly masked. Both the masked latent sequence and the conditioning embeddings are passed through linear projection layers to match the transformer’s hidden dimension. For each masked position, the energy-scoring head takes the corresponding transformer output as input and uses two sampled noise vectors to compute the energy-distance objective described in Eq. (\ref{eq:training-objective-2}).
Most importantly, all models reported in the paper are trained on a unified mixture of AudioCaps, WavCaps, and AudioSet. No model is trained on individual datasets, and no separate system configurations based on different dataset combinations are used.

\begin{figure}[]
    \begin{center}
    \includegraphics[width=\columnwidth]{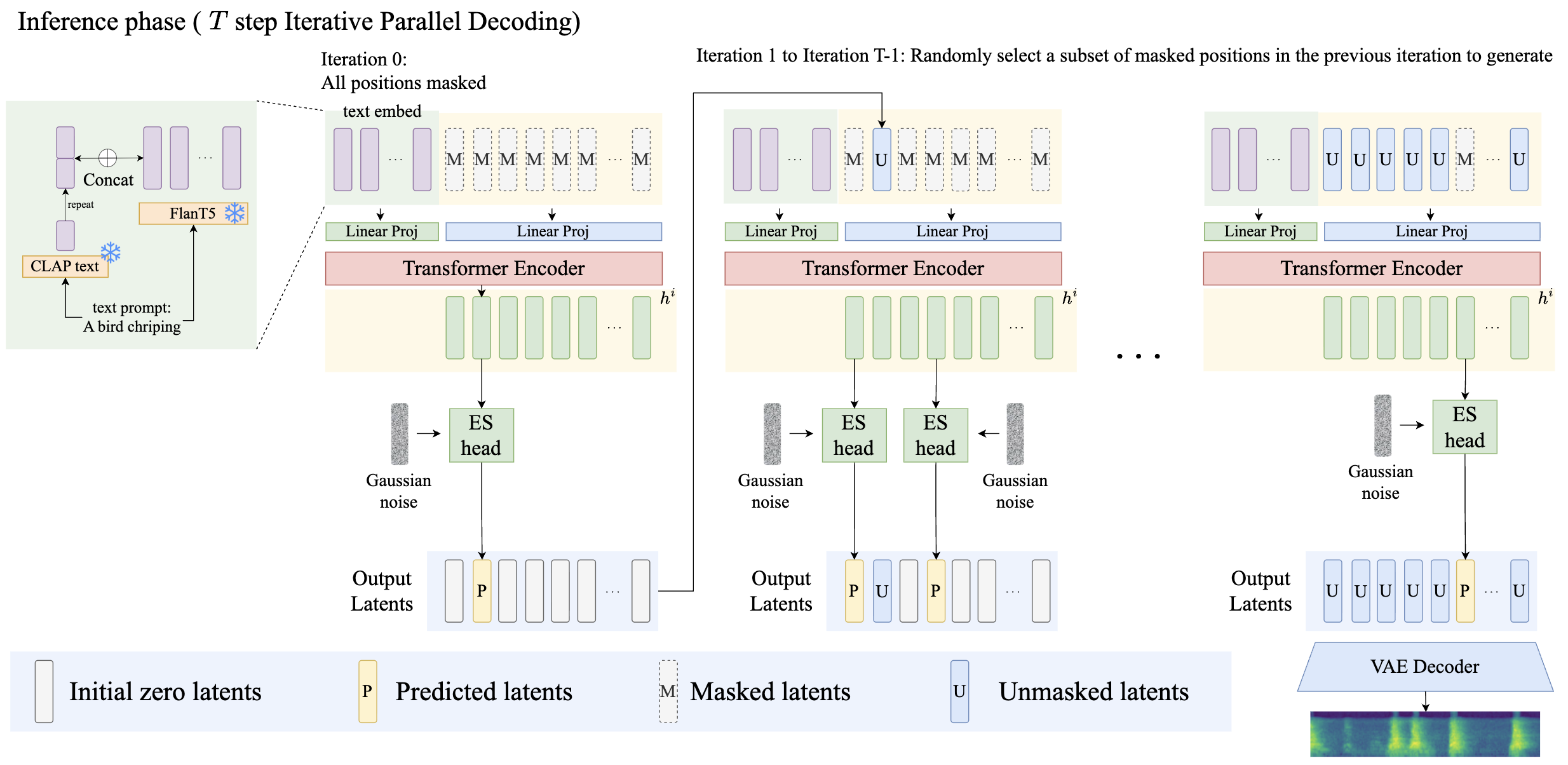}
    \end{center}
    \caption{Illustration of the inference phase with iterative parallel decoding with an energy-scoring framework. ``ES head'' denotes the energy-scoring head.}
    \label{fig:ipd}
\end{figure}

\subsection{Inference - Iterative Parallel Decoding}
\noindent As shown in Figure~\ref{fig:ipd}, during inference, we use iterative parallel decoding to gradually construct the full latent sequence as used in \citep{huangimpact}.  In the first decoding iteration, the model receives the text embeddings together with a fully masked latent sequence. In each iteration, the energy-scoring head predicts a randomly selected subset of latent positions. These predicted latents are inserted back into their corresponding positions in the input sequence, replacing the masked tokens and serving as the unmasked inputs for the next iteration. Throughout the decoding process, all positions are eventually generated. Once the full latent sequence is completely generated, the VAE decoder converts it back into a Mel spectrogram to produce the output audio.

\end{document}